\newcommand*{\fancyrefapplabelprefix}{app}
\newcommand{\partialInline}[2]{\partial #1 / \partial #2}
\newcommand{\derivN}[3]{\frac{\mathrm{d}^{#3} #1}{\mathrm{d} #2^{#3}}}
\newcommand{\rmag}{r} 
\newcommand{\vvec}[1]{\bm{#1}}
\newcommand{\rvec}{\vvec{\rmag}}
\newcommand{\dir}[1]{\hat{#1}}
\newcommand{\rdir}{\dir{\rvec}}
\newcommand{\tens}[1]{\bm{#1}}
\newcommand{\cLen}{\ell_0}
\newcommand{\mLen}{b} 
\newcommand{\n}{n} 
\newcommand{\chainStretch}{\gamma} 
\newcommand{\relChainStretch}{\gamma_r} 
\newcommand{\chainDensity}{N} 
\newcommand{\nvec}{\hat{\vvec{n}}} 
\newcommand{\cStrSqr}{\zeta}
\newcommand{\eFieldMag}{E} 
\newcommand{\eFieldVec}{\vvec{\eFieldMag}}
\newcommand{\eDir}{\dir{\eFieldVec}}
\newcommand{\dipoleMag}{\mu} 
\newcommand{\dipoleVec}{\vvec{\dipoleMag}}
\newcommand{\um}{u}
\newcommand{\dipole}{\dipoleVec} 
\newcommand{\susceptibility}{\chi}
\newcommand{\susPara}{\susceptibility_\parallel}
\newcommand{\susPerp}{\susceptibility_\perp}
\newcommand{\dsus}{\Delta \susceptibility}
\newcommand{\dipoleSusceptibility}{\tens{\susceptibility}}
\newcommand{\sustensexpr}{
	\susPara \nvec \otimes \nvec + \susPerp \left(\identity - \nvec \otimes \nvec\right)
}
\newcommand{\trans}[1]{#1^{T}} 
\newcommand{\kB}{k} 
\newcommand{\T}{T} 
\newcommand{\chainFreeEnergy}{w}
\newcommand{\WLCFreeEnergy}{\chainFreeEnergy_{\text{wlc}}}
\newcommand{\KGFreeEnergy}{\chainFreeEnergy_{\text{KG}}}
\newcommand{\GaussFreeEnergy}{\chainFreeEnergy_{\text{G}}}
\newcommand{\DEFreeEnergy}{\chainFreeEnergy_{\text{DE}}}
\newcommand{\bioFreeEnergy}{\chainFreeEnergy_{\text{bio}}}
\newcommand{\forcemag}{f}
\newcommand{\forcevec}{\vvec{\forcemag}}
\newcommand{\pFrame}{\vvec{P}}
\newcommand{\pDir}{\hat{\vvec{v}}}
\newcommand{\uVec}{\dir{\vvec{\upsilon}}}
\newcommand{\nuVec}{\dir{\vvec{\nu}}}
\newcommand{\probDens}{\rho}
\newcommand{\elecFreeEnergy}{\chainFreeEnergy_{\text{e}}}
\newcommand{\orientFreeEnergy}{\chainFreeEnergy_{\text{o}}}
\newcommand{\unodim}{\kappa}
\newcommand{\uOnodim}{\kappa_{\perp}}
\DeclareMathOperator{\erf}{erf}
\newcommand{\erfw}{\erf\left(\sqrt{\unodim}\right)}
\newcommand{\vperm}{\epsilon_0}
\newcommand{\unitSphere}{\mathbb{S}^2}
\newcommand{\rSphere}[1]{\unitSphere\left({#1}\right)}
\newcommand{\df}[1]{\text{d}#1} 
\newcommand{\Lang}{\mathcal{L}}
\newcommand{\invLang}{\Lang^{-1}}
\newcommand{\pLen}{p}
\newcommand{\zmultzero}{\invLang\left(\chainStretch\right)}
\newcommand{\numChains}{M}
\newcommand{\defMap}{\bm{\Phi}}
\newcommand{\pullback}[1]{\tilde{#1}}
\newcommand{\Rmag}{\pullback{\rmag}}
\newcommand{\Rvec}{\pullback{\rvec}}
\newcommand{\Rdir}{\uVec}
\newcommand{\avg}[1]{\left< #1 \right>}
\newcommand{\avgOverR}[1]{\avg{#1}}
\newcommand{\generic}{\Box}
\newcommand{\freeEnergyDensity}{\mathcal{W}}
\newcommand{\innerFreeEnergyDensity}{\widehat{\freeEnergyDensity}}
\newcommand{\netFreeEnergyDensity}{\freeEnergyDensity_{\text{net}}}
\newcommand{\F}{\tens{F}}
\newcommand{\cGreenSym}{C}
\newcommand{\cGreen}{\tens{\cGreenSym}}
\newcommand{\strTens}{\tens{V}}
\newcommand{\polRot}{\tens{R}}
\newcommand{\genRot}{\tens{Q}}
\newcommand{\genRotStar}{\genRot^*}
\newcommand{\changeCoord}[1]{\bar{#1}}
\newcommand{\flucField}{f}
\newcommand{\rodVec}{\vvec{\omega}}
\newcommand{\eulerx}{\alpha}
\newcommand{\eulery}{\beta}
\newcommand{\eulerz}{\xi}
\newcommand{\chainStretchStar}{\chainStretch^*}
\newcommand{\SOThree}{SO\left(3\right)}
\newcommand{\Gradd}{\mathrm{Grad}}
\newcommand{\takeGrad}[1]{\Gradd \: #1}
\newcommand{\gapvec}{\bm{\mathcal{E}}}
\newcommand{\xj}{x}
\newcommand{\yj}{y}
\newcommand{\xvec}{\vvec{\xj}} 
\newcommand{\yvec}{\vvec{\yj}} 
\DeclareMathOperator{\diag}{diag} 
\DeclareMathOperator{\trace}{Tr} 
\newcommand{\pStretchSymbol}{\lambda} 
\newcommand{\pStretch}[1]{\pStretchSymbol_{#1}} 
\newcommand{\identity}{\tens{I}}
\newcommand{\euclid}[1]{\hat{\vvec{e}}_{#1}}
\newcommand{\shear}{s}
\newcommand{\orderOf}[1]{\mathcal{O}\left(#1\right)}
\DeclareMathOperator{\Tr}{Tr}
\newcommand{\smallpar}{\epsilon}
\newcommand{\Reals}{\mathbb{R}}
\newcommand{\shearDef}{s}
\newcommand{\tmag}{t}
\newcommand{\tvec}{\vvec{\tmag}}
\newcommand{\tnodim}{\tau}
\newcommand{\vol}{\mathcal{V}}
\newcommand{\polar}{\theta}
\newcommand*{\gnuplotinput}[2][1.0]{%
  \begingroup
  \let\@gnplt@input@includegraphics=\includegraphics
  \def\includegraphics##1{\@gnplt@input@includegraphics[scale=#1]{#2}}%
  \let\@gnplt@input@picture=\picture
  \def\picture{\unitlength=#1\unitlength\relax\@gnplt@input@picture}%
  \input{#2}%
  \endgroup
}
\newtheorem{proposition}{Proposition}
\newtheorem*{lemma}{Lemma}
\theoremstyle{remark}
\newtheorem{remark}{Remark}
\newcommand*{\fancyrefproplabelprefix}{prop}
\newcommand*{\fancyreflemlabelprefix}{lem}
\newcommand{\singleGraphWidth}{0.75\linewidth}
\newcommand{\capTitle}[1]{\emph{#1}}
\renewcommand{\hl}[1]{{#1}}
\newenvironment{hlbreakable}%
{}%
{}
\begin{document}

\title{Polymer networks which locally rotate to accommodate stresses, torques, and deformation}

\author{Matthew Grasinger}
\email{matthew.grasinger.1@us.af.mil}
\affiliation{Materials and Manufacturing Directorate, Air Force Research Laboratory}
\preprint{To appear in Journal of the Mechanics and Physics of Solids, doi: \url{https://doi.org/10.1016/j.jmps.2023.105289}.}



\begin{abstract}
  Polymer network models construct the constitutive relationships of a broader polymer network from the behavior of a single polymer chain (e.g. viscoelastic response to applied forces, applied electromagnetic fields, etc.). 
  Network models have been used for multiscale phenomena in a variety of contexts such as rubber elasticity, soft multifunctional materials, biological materials, and even the curing of polymers. 
  For decades, a myriad of polymer network models have been developed with differing numbers of chains, arranged in different ways, and with differing symmetries.
  To complicate matters further, there are also competing assumptions for how macroscopic variables (e.g. deformation) are related to individual chains within the network model.
  In this work, we propose a simple, intuitive assumption for how the network locally rotates relative to applied loading (e.g. stresses, external fields) and show that this assumption unifies many of the disparate polymer network models--while also recovering one of the most successful models for rubber elasticity, the Arruda-Boyce $8$-chain model. 
  The new assumption is then shown to make more intuitive predictions (than prior models) for stimuli-responsive networks with orientational energies (e.g. electroactive polymers), which is significant for shape morphing and designing high degree of freedom actuators (e.g. for soft robotics).
  Lastly, we unveil some surprising consequences of the new model for the phases of multistable biopolymer and semi-crystalline networks.
\end{abstract}

\maketitle

\section{Introduction} \label{sec:intro}

\hl{
Constitutive models for (solid) polymer networks can roughly be divided into two categories: \begin{inparaenum}[1)] \item phenomenological models based on strain invariants and/or principal stretches, and \item micromechanically based models that aim to connect macromolecular and network structural properties to the bulk, continuum scale. \end{inparaenum}
In the first category, the well known Neo-Hookean is among the simplest.
Other well known examples include Mooney-Rivlin type models~\cite{mooney1940theory,rivlin1948large} and Ogden models~\cite{destrade2022ogden,ogden1972large}.
The focus of this work will be on the latter category: micromechanically based models.
}

Entropy drives the elasticity of soft networks consisting of long polymer chains.
As a result, there is a rich history of statistical mechanics guiding our understanding of polymer networks~\cite{treloar1975physics}.
Typically a continuum model derived from such principles starts at the scale of a polymer chain where the free energy of a polymer, as a function of its stretch, is determined.
Then the continuum response to deformation can be constructed by using the single chain response in a so-called polymer network model.
These network models allow one to relate macroscopic state variables, such as deformation, to individual chains within the network.

Many polymer network models exist in the literature.
Some have a discrete number of chains arranged in a representative volume element (RVE) (i.e. unit cell) while others model the network via a probability density of polymer chains.
There are also competing assumptions for how macroscopic deformation is related to the deformations of chains within the network.
Historically, for discrete network models, there are the $3$-chain~\cite{james1943theory}, the $4$-chain~\cite{treloar1943elasticity} \hl{(i.e. Flory-Rehner model)}, and the $8$-chain~\cite{arruda1993threee} models.
Continuous network models began with the full network model of Wu and van der Giessen~\cite{wu1993improved} which assumes that all of chains in the network are of the same length (in the reference configuration) and that the directions of chains are uniformly distributed over the unit sphere.
In response to the (at times) underwhelming fit of the full network model to benchmark data in rubber elasticity, it was improved upon by the microsphere model of~\citet{miehe2004micro}.
A key feature of the microsphere model was the development of a new assumption for how macroscopic deformation is related to chains within the network. 
The assumption involved a so-called ``stretch fluctuation field'' on the unit sphere of chain directions.
The stretch fluctuation field is determined by minimizing the average free energy of the network subject to homogenization-based constraints.
For certain fitted parameters, the microsphere model recovers the $8$-chain model exactly.
\hl{Around the same time~\citet{beatty2003average} suggested a different perspective on the full network model: it was shown that if one takes the chain free energy for the average stretch of the full network model--as opposed to the average chain free energy--one again recovers the $8$-chain model.
	
Since these earlier developments, several variants have been proposed that aim to retain much of the simplicity and physical-basis of earlier models while also providing a better fit to the S-shaped stress-strain curves of the well-known Treloar data for rubber elasticity~\cite{erman1980moments,bechir2010three,miroshnychenko2009heuristic,xiang2018general,kroon20118,zhan2023new}. 
Many of these variants decompose the free energy density of the network additively into a contribution which represents the ``cross-linked network''--often modelled using the $8$-chain or a full network model--and a topological constraint contribution due to chain entanglements within the network.
The topological constraint contribution has taken different forms: some based on chain statistics and micromechanics~\cite{erman1980moments,xiang2018general,kroon20118} while others appear more phenomenological~\cite{bechir2010three,miroshnychenko2009heuristic}. 
For example,~\citet{bechir2010three} argues that the standard $3$-chain model can, phenomenologically speaking, be used to model the topological constraint contribution.
Whereas, alternatively,~\citet{miroshnychenko2009heuristic} models the topological contribution using strain invariants.
For full network type models, a new macro-to-micro kinematic assumption has recently been developed, based on an affine stretch projection and a microscale analog of the Biot stress, which performs well for capturing complex and multiaxial stress-strain relationships~\cite{zhan2023new}.
Many constitutive models for rubber-like elasticity exist in the literature that attempt to balance between competing objectives: \begin{inparaenum}[1)] \item contain as few fitting parameters as possible, \item have the ability to reproduce complex deformation behavior, and \item in the interest of informing material design, have model parameters which are physically interpretable and can be reasonably connected to the underlying microstructure of the material. \end{inparaenum}
Comprehensive reviews can be found in~\citet{hossain2015eight} and~\citet{dal2021performance}.
}

Polymer network models have been applied to a vast number of different polymer networks related to adhesives~\cite{zhao2022network}, biomechanics~\cite{grekas2021cells,song2022hyperelastic,kuhl2005remodeling,alastrue2009anisotropic}, magneto-active polymers (e.g. hard-magnetic soft materials)~\cite{garcia2021microstructural,moreno2022effects} and electro-active polymers~\cite{grasingerIPtorque,grasingerIPflexoelectricity,grasinger2019multiscale,cohen2016electroelasticity,cohen2018generalized,itskov2018electroelasticity}.
Network models have also been used for capturing phenomena such as viscoelasticity~\cite{bergstrom1998constitutive,zhao2022network}, the curing of networks~\cite{hossain2011modelling}, excluded volume induced strain-hardening and incompressibility~\cite{khandagale2022statistical}, and chain scission and fracture~\cite{mulderrig2021affine,mao2017rupture}.
As a result, continuous network models have been generalized to anisotropic chain distributions~\cite{alastrue2009anisotropic,grasinger2020architected} and polydisperse networks (i.e. networks consisting of chains of different lengths)~\cite{mulderrig2021affine}.
While it is widely accepted that the $8$-chain and microsphere models \hl{(and their variants)} are \hl{among} the most accurate for the elasticity of rubber-like materials, the appropriate polymer network model for stimuli-responsive~\cite{grasingerIPtorque,grasingerIPflexoelectricity,grasinger2019multiscale,garcia2021microstructural,moreno2022effects}, biopolymer~\cite{song2022hyperelastic,kuhl2005remodeling,alastrue2009anisotropic}, and semi-crystalline networks~\cite{rastak2018non} are still active areas of research; as is the appropriate polymer network model for fracture~\cite{mulderrig2021affine,mao2017rupture,lei2022network}.
Recent results suggest, for instance, that the $8$-chain and microsphere models are inaccurate and make nonphysical predictions for some biopolymer networks~\cite{song2022hyperelastic}.
Similarly, theoretical predictions made by the $3$- and $8$-chain model for certain stimuli-responsive polymer networks, where chains have a free energy dependence on their orientation (relative to an applied electromagnetic field), can also be counter-intuitive~\cite{grasingerIPtorque,grasinger2019multiscale}.
Meanwhile, properly understanding biopolymer and stimuli-responsive networks will be key to achieving breakthroughs in soft, biologically-inspired robotics\hl{~\cite{cianchetti2018biomedical,majidi2019soft}}, advanced prosthetics, wearable sensors\hl{~\cite{ates2022end,heikenfeld2018wearable}}, and the interfacing of biology and electronics, more broadly.


The primary focus of this work is on discrete network models.
Numerical integration of the full-network and microsphere models is nontrivial~\cite{verron2015questioning}. 
More importantly, discrete network models can characterize the geometrical features of the network which are significant for the homogenization of composites (e.g.~\cite{castaneda2011homogenization}), strain gradient elasticity~\cite{jiang2022strain}, and strain gradient couplings such as flexoelectricity~\cite{grasingerIPflexoelectricity}.
Finally, it will later be shown that, in certain contexts, discrete polymer network models have an apparent advantage over their continuous counterparts because cross-links (i.e. junctions between chains) in real polymer networks consist of a finite number of chains.

The standard macro-to-micro kinematic assumption for discrete polymer network models--motivated by coordinate system invariance--is that the RVE rotates such that its edges lie along the principal directions of stretch.
However, recent theoretical work and experimental observations suggest the importance of the local rotation of polymer network RVEs~\cite{kuhl2005remodeling,grasingerIPtorque,cohen2018generalized}.
In \fref{sec:polymer-chains}, as a preamble, some broadly used polymer chain models are reviewed.
Then, in \fref{sec:new-model}, we develop a new macro-to-micro kinematic assumption for discrete polymer network models where the rotation of the RVE is chosen such that it has minimal free energy upon deformation (i.e. the network locally rotates to most efficiently accommodate stresses, torques, and deformation).
In \fref{sec:unification}, it is shown that, for the broadly used polymer chain models (e.g. polymer chain models for rubber elasticity), the new kinematic assumption recovers the successful Arruda-Boyce $8$-chain and microsphere models--for all of the $3$-, $4$-, $6$-, and $8$-chain RVEs.
Thus, the new assumption unifies disparate polymer network models and reproduces benchmark experimental data.
Then in \fref{sec:stimuli-responsive} it is shown that the new macro-to-micro assumption produces discrete polymer network models which are more thermodynamically consistent and make more intuitive predictions for stimuli-responsive networks.
In \fref{sec:nonconvex}, surprising implications of the new model for biopolymer and semi-crystalline networks are unveiled.
\Fref{sec:conclusion} concludes the work.

\section{Polymer chains} \label{sec:polymer-chains}

The fundamental kinematic description of a polymer chain is its the end-to-end vector, $\rvec$, which is the vector that spans from the start of the chain to its end.
Let $\rmag = \left|\rvec\right|$ denote its magnitude.
Many polymers have a well defined maximum length, $\cLen$, such that $\rmag \in \left[0, \cLen\right]$.
We refer to $\chainStretch = \rmag / \cLen$ as the chain stretch.
A natural way to model the elastic response of a chain is through the specification of its free energy as a function of its stretch, i.e. $\chainFreeEnergy = \chainFreeEnergy\left(\chainStretch\right)$.
Here we briefly review $3$ examples of free energy functions: the Gaussian chain (GC), the Kuhn and Gr\"{u}n chain (KGC), and the wormlike chain (WLC).

In the freely jointed chain model, it is assumed that the polymer is made up of $\n$ rigid links of length $\mLen$--called Kuhn segments--so that $\cLen = \n \mLen$.
The links are bonded end-to-end and links are free to rotate about their neighboring bonds.
Given the model assumptions, the free energy response can be \hl{derived} using statistical mechanics; however the exact solution is untenable.
As a result, approximations are derived in certain limits or by introducing additional simplifying assumptions.
One such approximation for the free energy of a freely jointed chain is the Gaussian chain approximation:
\begin{equation} \label{eq:GC}
	\GaussFreeEnergy\left(\chainStretch\right) = \frac{3}{2} \n \kB \T \chainStretch^2.
\end{equation}
The GC approximation is derived using random walk statistics and is valid for small to moderate chain stretches, but does not capture the finite extensibility of the chain.
A better approximation of freely jointed chain behavior is obtained by the KGC formula~\cite{treloar1975physics}:
\begin{equation} \label{eq:KGC}
	\KGFreeEnergy\left(\chainStretch\right) = \n \kB \T \left(\chainStretch \invLang\left(\chainStretch\right) + \log \left(\frac{\invLang\left(\chainStretch\right)}{4\pi \sinh \invLang\left(\chainStretch\right)}\right) \right).
\end{equation}
which does capture the finite extensibility of the chain (i.e. $\KGFreeEnergy \rightarrow \infty$ as $\chainStretch \rightarrow 1$).

An alternative that describes many realistic and stiffer polymers is the wormlike chain model.
In contrast to the FJC, which consists of a discrete number of rigid links, the WLC is a kind of continuously flexible rod.
The bending stiffness of the chain per unit length is given by the persistence length, $\pLen$.
A simple approximation for its free energy is~\cite{kuhl2005remodeling}:
\begin{equation} \label{eq:WLC}
	\WLCFreeEnergy\left(\chainStretch\right) = \frac{\kB \T \cLen}{4 \pLen}\left(2 \chainStretch^2 + \hl{\frac{1}{1 - \chainStretch}} - \chainStretch\right),
\end{equation}
which also captures the finite extensibility of the chain.
The WLC has been used to successfully model DNA and other biomolecules~\cite{kuhl2005remodeling,alastrue2009anisotropic}.

Given the free energy response of a single chain, its elastic response can be obtained directly by minimization:
\begin{equation}
	\min_{\rvec} \left\{\chainFreeEnergy - \forcevec \cdot \rvec\right\} \implies \forcevec = -\frac{\partial \chainFreeEnergy}{\partial \rvec}.
\end{equation}

\section{New macro-to-micro kinematic assumptions} \label{sec:new-model}

\subsection{Continuum mechanics}

To set notation, we introduce some fundamental concepts in continuum solid mechanics.
Let $\xvec$ be a material point of a solid body in the reference configuration and $\yvec = \defMap\left(\xvec\right)$ its corresponding point in the deformed configuration.
Here $\defMap$ is the mapping which describes the deformation of the solid body.
The deformation gradient, $\F = \takeGrad{\defMap}$, maps so-called line elements--which are infinitesimal changes in position, $\delta \xvec$--from the reference to the current configuration, i.e. $\delta \yvec = \F \delta \xvec$.
The right Cauchy-Green tensor, $\cGreen = \trans{\F} \F$, describes the stretches of line elements under $\F$.
Physically, we require that $\det \F > 0$.
Then by the polar decomposition theorem $\F$ can be uniquely decomposed into $\F = \polRot \strTens$ where $\polRot \in \SOThree$, $\strTens = \sqrt{\cGreen}$ is positive definite, and $\SOThree$ is the group of $3$ dimensional rotations.
Moving forward, we refer to $\strTens$ as the stretch tensor and its eigenvalues and eigenvectors as the principal stretches, $\pStretch{i}$, and principal directions, $\pDir_i$, respectively.
Because $\strTens$ is symmetric, the principal directions can be chosen such that $\pDir_i \cdot \pDir_j = \delta_{ij}$ and this set of directions constitutes a principal frame.
Analogous to the single chain case, the elasticity of the solid polymer network can be modeled via a free energy density $\freeEnergyDensity = \freeEnergyDensity\left(\F\right)$.

\subsection{Classical polymer network models}

The free energy density of a polymer network can be constructed from the behavior of a single chain through the use of a polymer network model.
Discrete polymer network models consist of a representative volume element (RVE), polymer chains arranged within the RVE, and associated assumptions for how $\F$ acts on the RVE.
\Fref{fig:classical-network-models} depicts examples of classical polymer network models: \begin{inparaenum}[a)] \item the $3$-chain model~\cite{james1943theory}, \item the $4$-chain model~\cite{treloar1943elasticity}, \item the $8$-chain model~\cite{arruda1993threee}, and \item the full network model~\cite{wu1993improved}. \end{inparaenum}
A key consideration for the assumed relationship between $\F$ and the deformation of an RVE is that the resulting constitutive model be coordinate system invariant.
The $3$-chain and $8$-chain models handle this via the same kinematic assumption: \begin{inparaenum}[1)] \item the RVE rotates such that the edges of the cube stretch in the principal frame and \item the edges of the cube stretch according to the principal stretches. \end{inparaenum}
We refer to this as the \emph{principal frame assumption}.
The principal frame assumption can be formulated as follows.
Let $\Rvec_i, i = 1, 2, \dots, \numChains$ denote the chain end-to-end vectors before deformation (where $\numChains$ is the number of chains in the RVE).
Then
\begin{equation}
	\rvec_i = \strTens \pFrame \Rvec_i
\end{equation}
where
\begin{equation}
	\pFrame = \begin{pmatrix}
		| & | & | \\
		\pDir_1 & \pDir_2 & \pDir_3 \\
		| & | & |
	\end{pmatrix}
\end{equation}
is an orthogonal transformation which aligns the edges of the RVE with the principal frame.
The stretches satisfy
\begin{equation} \label{eq:chain-stretch}
	\chainStretch_i = \frac{1}{\cLen} \left|\strTens \pFrame \Rvec_i\right| = \frac{1}{\cLen} \sqrt{\Rvec_i \cdot \changeCoord{\cGreen} \Rvec_i}
\end{equation}
where by $\changeCoord{\cGreen}$ we mean a similarity transformation of $\cGreen$.
In this case, $\changeCoord{\cGreen}$ is a diagonalization of $\cGreen$.
Assuming the interactions between chains are negligible, the free energy density is approximated by an average of the chain free energies over the RVE times the number of chains per unit volume, $\chainDensity$.
Let $\avgOverR{ \generic }$ denote the quantity $\generic$ averaged over the chains in the RVE.
The $3$-chain polymer network model results in the constitutive relation
\begin{equation} \label{eq:3-chain}
	\freeEnergyDensity\left(\F\right) = \chainDensity \avgOverR{\chainFreeEnergy\left(\frac{\left|\strTens \pFrame \Rvec\right|}{\cLen}\right)} = \frac{\chainDensity}{3} \sum_{i=1}^{3} \chainFreeEnergy\left(\frac{\Rmag}{\cLen} \pStretch{i}\right);
\end{equation}
and the $8$-chain network model results in a different constitutive relation
\begin{equation} \label{eq:8-chain}
	\freeEnergyDensity\left(\F\right) = \chainDensity \chainFreeEnergy\left(\frac{\Rmag}{\cLen} \sqrt{\frac{\pStretch{1}^2+\pStretch{2}^2+\pStretch{3}^2}{3}}\right).
\end{equation}
\begin{figure*}
	\centering
	\includegraphics[width=\linewidth]{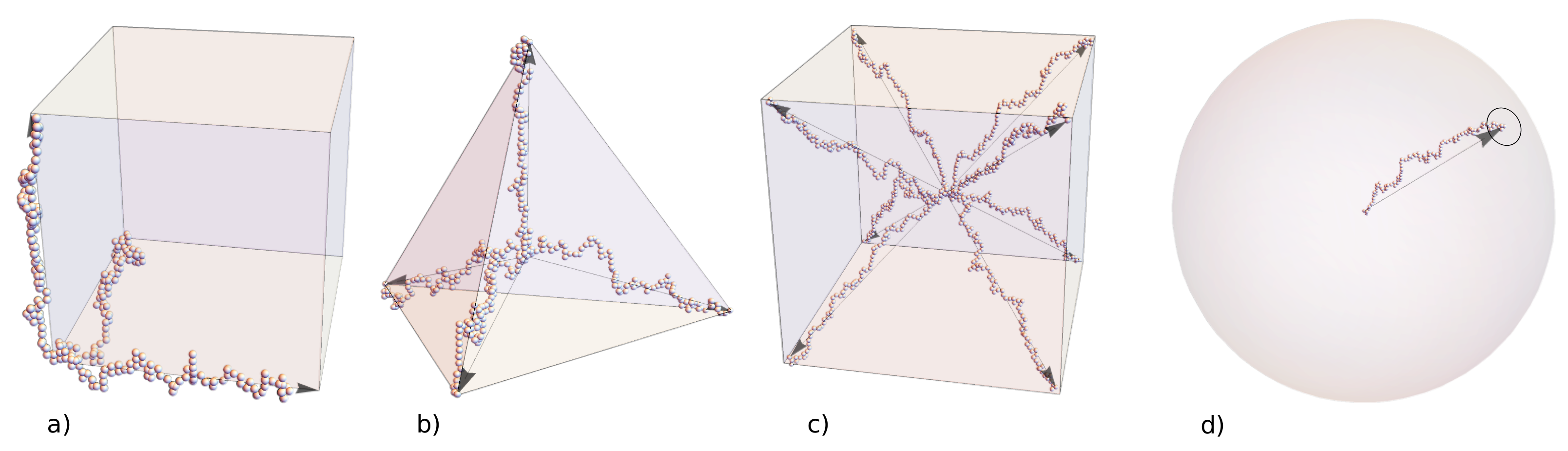}
	\caption{
		\capTitle{Classical polymer network models.} The representative volume elements of the a) $3$-chain model, b) the $4$-chain model, and c) the $8$-chain model. The sphere in d) is representative of a continuous, uniform density of chain directions constituent of the full network and microsphere models.
	}
	\label{fig:classical-network-models}
\end{figure*}

It seems to be widely accepted that the $8$-chain model produces more accurate constitutive models for a wide array of real polymer networks~\cite{miehe2004micro,arruda1993threee,kuhl2005remodeling}.
It is noteworthy then that \begin{inparaenum}[1)] \item there is such a stark difference between the $3$-chain and $8$-chain models and \item that it is so widely accepted that the $8$-chain produces better constitutive models. \end{inparaenum}
One may ask: does this suggest that more real polymer networks consist of cross-links of $8$ chains as opposed to $3$, $4$, or $6$?
If not, what is it that drives the success of the $8$-chain model?

Further questions arise in the context of polymer networks with orientational energies.
For example, if the monomers in a chain have electrical dipoles and an electric field is externally applied, then the chain free energy is not only dependent on the magnitude of stretch, but also the direction of stretch, $\rdir$~\cite{grasinger2022statistical,grasinger2020statistical,grasingerIPflexoelectricity}.
The $8$-chain model with the principal frame assumption can lead to counter-intuitive, and likely nonphysical, predictions in such cases~\cite{grasingerIPtorque,grasinger2019multiscale}.

\subsection{Free rotation assumption}

Here we propose a new set of macro-to-micro kinematic assumptions for discrete polymer networks with an emphasis on correctly capturing the orientation of chains within the network after some local macroscopic deformation $\F$.
Because we are interested in stimuli-responsive networks with orientational energies, we assume from here on that the chain free energy is given as a function of the end-to-end vector, i.e. $\chainFreeEnergy = \chainFreeEnergy\left(\rvec\right)$.
The principal frame assumption ensures that the resulting constitutive relation is independent of a particular choice of coordinate system; however, there does not appear to be strong physical justification beyond this and the assumption can breakdown when chains have an orientational energy.
To address this, consider the notion of locally rotating a cross-link and its associated RVE by some arbitrary $\genRot \in \SOThree$.
According to the principal frame assumption: $\genRot = \pFrame$.
Here, however, we instead \emph{propose to choose $\genRot$ such that the free energy density is minimized}.
The principal argument is that the polymer network is able to locally, microscopically rotate in order to accommodate the macroscopic deformation in the most energetically efficient way possible.

Next we consider the deformation of the RVE via the stretch tensor, $\strTens$.
Although this provides a clean geometric picture, it also seems to lack a strong physical justification.
Instead we recall that line elements get mapped from the reference configuration to the deformed configuration by the deformation gradient, $\F$.
We posit that polymer end-to-end vectors are sufficiently ``short'' so as to be considered line elements and assume that, after rotation by $\genRot$, end-to-end vectors are deformed according to $\F$.

A final consideration is the possibility of a nonlocal interaction energy associated with rotating the cross-link relative to the broader network, $\netFreeEnergyDensity$.
This rotation, for instance, may cause stresses and deformations on neighboring cross-links.
Considering this contribution, the new free energy density is formulated as follows:
\begin{subequations} \label{eq:new-model}
\begin{align}
	\label{eq:new-model-min-rot}
	\freeEnergyDensity\left(\F\right) &= \min_{\genRot \in \SOThree} \left\{\innerFreeEnergyDensity\left(\F, \genRot\right) + \netFreeEnergyDensity\left(\genRot, \dots\right)\right\}, \\
	\label{eq:new-model-inner}
	\innerFreeEnergyDensity\left(\F, \genRot\right) &= \chainDensity \avgOverR{\chainFreeEnergy\left(\F \genRot \Rvec\right)}.
\end{align}
\end{subequations}
The notation $\netFreeEnergyDensity\left(\genRot, \dots\right)$ is used to emphasize that, because this contribution to the free energy is nonlocal, it is possibly a function of $\F$, $\takeGrad{\F}$, and/or $\takeGrad{\genRot}$.
As such, the form of $\netFreeEnergyDensity\left(\genRot, \dots\right)$ may be important when modelling strain gradient effects such as strain gradient elasticity~\cite{jiang2022strain} or flexoelectricity~\cite{grasingerIPflexoelectricity}.
These effects will be a topic of future work.
\hl{In the interim, the issue of compatibility when locally rotating networks are subject to an inhomogeneous deformation, and its potential implications for $\netFreeEnergyDensity$, is briefly explored in \fref{app:compatibility}.}
However, throughout the remainder of the work, motivated by both simplicity and the success of the $8$-chain model, we will assume that the $\netFreeEnergyDensity$ contribution is negligible and will not consider it further.
The newly developed constitutive model--depicted in \fref{fig:free-rotation}--then reduces to a polymer network which is able to locally rotate in order to accommodate the macroscopic deformation, $\F$, according to what has been referred to in previous work as \emph{the principle of minimum average free energy}~\cite{miehe2004micro,rastak2018non,mulderrig2021affine}.
For conciseness, we refer to this new model assumption as the \emph{free rotation assumption}.
\begin{figure*}
	\centering
	\includegraphics[width=0.75\linewidth]{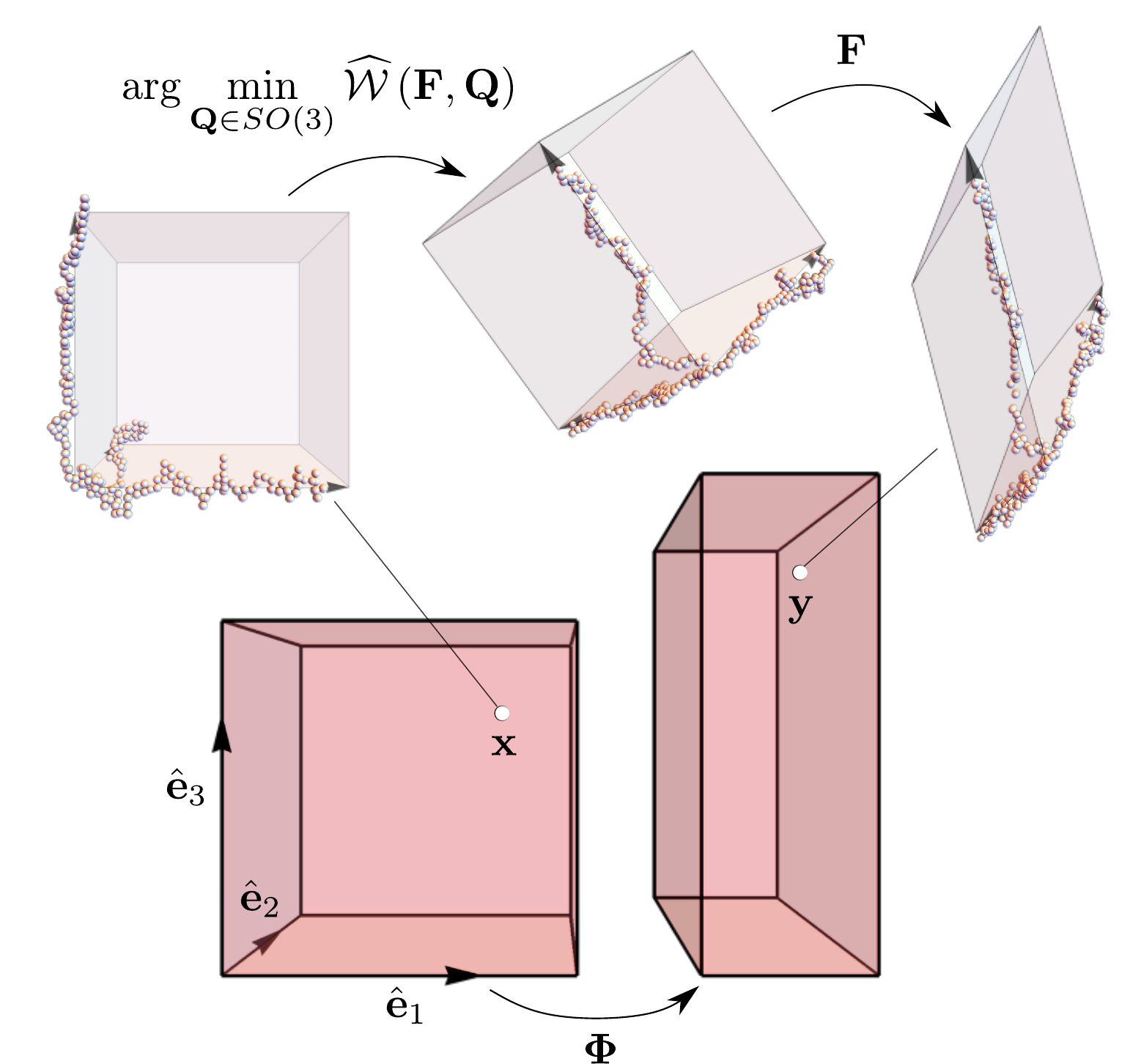}
	\caption{
		\capTitle{Free rotation assumption.}
		A solid polymer network (bottom row) is deformed according to the mapping $\defMap$.
		The RVE of the network at material point $\xvec$, in the reference configuration, is shown in the top left.
		In the deformed configuration, the RVE, now at $\yvec = \defMap\left(\xvec\right)$, first locally rotates to most efficiently accommodate the macroscopic deformation $\F$ (i.e. rotates by $\arg \min_{\genRot \in \SOThree} \innerFreeEnergyDensity\left(\F, \genRot\right)$), then is deformed according to $\F$.
	}
	\label{fig:free-rotation}
\end{figure*}

\subsection{$\SOThree$ representations and optimization}

It is well known that there are numerical pathologies associated with the Euler angle representation when optimizing over rotations~\cite{kuehnel2003minimization}.
In this work, we instead use the exponential, or ``axis-angle'', representation.
The Rodrigues vector, $\rodVec \in \Reals^3$, describes the angle of rotation, $\varphi = \left|\rodVec\right|$ and an axis about which to rotate $\dir{\vvec{u}} = \rodVec / \varphi$.
The rotation, $\genRot$, is obtained by taking the exponential of the generating skew-symmetric tensor
\begin{equation}
	\tens{A} = \begin{pmatrix}
			0 & -\dir{u}_3 & \dir{u}_2 \\
			\dir{u}_3 & 0 & -\dir{u}_1 \\
			-\dir{u}_2 & \dir{u}_1 & 0
		\end{pmatrix}; \quad \genRot = \exp\left(\varphi \tens{A}\right) = \identity + \sin \varphi \tens{A} + \left(1 - \cos \varphi\right) \tens{A}^2.
\end{equation}
For the numerical examples presented herein, equation \eqref{eq:new-model} takes on the form
\begin{equation}
	\freeEnergyDensity\left(\F\right) = \min_{\rodVec} \left\{\innerFreeEnergyDensity\left(\F, \genRot\left(\rodVec\right)\right)\right\} \quad \text { subject to } \left|\rodVec\right| < 2 \pi
\end{equation}
which was solved in Mathematica using \texttt{FindMinimum} for local optimization and \texttt{NMinimize} for global optimization.
\texttt{FindMinimum} uses a series of interior point methods for constrained (local) optimization; and \texttt{NMinimize} uses Nelder-Mead methods, supplemented by differential evolution.
These choices were made for simplicity of implementation; however, we remark that the exponential representation is amenable to gradient-based methods as well~\cite{kuehnel2003minimization}.

\section{Unification of discrete polymer network models} \label{sec:unification}

In the previous section, we questioned what specifically drives the success of the $8$-chain model relative to the $3$-chain, full network, and other polymer network models.
Here we explore this in more detail through the lens of the newly proposed (local) free rotation assumption and the principle of minimum average free energy.
Towards this aim, we consider a trivial property of convex functions that we refer to as the \emph{equipartition property}.
\begin{lemma}[Equipartition property] \label{lem:equipartition}
	Given a convex function $f$ and a conserved quantity $y$ which is partitioned among $n$ variables, i.e. $y = x_1 + \dots + x_n$, a solution to \begin{equation*}
		\min_{x_1, \dots, x_n} \sum_{i=1}^n f\left(x_i\right) \; 
		\text{ subject to } \: \sum_{i=1}^n x_i = y
	\end{equation*}
	is $x_1 = \dots = x_n = \frac{y}{n}$.
\end{lemma}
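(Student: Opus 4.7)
The plan is to reduce the lemma to a direct application of Jensen's inequality, which is precisely the defining property of convex functions for exactly this kind of averaging statement and so should be the shortest path to the conclusion.

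First I would verify feasibility of the proposed optimizer: the uniform candidate $x_1 = \dots = x_n = y/n$ manifestly satisfies the constraint $\sum_i x_i = n \cdot (y/n) = y$, and at this point the objective attains the value $n f(y/n)$. It then remains only to show that this value bounds the objective from below on the feasible set.

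For any feasible tuple $(x_1, \dots, x_n)$, Jensen's inequality applied to the convex $f$ gives
\begin{equation*}
	f\!\left(\frac{1}{n} \sum_{i=1}^n x_i\right) \leq \frac{1}{n} \sum_{i=1}^n f(x_i).
\end{equation*}
Substituting the constraint $\sum_i x_i = y$ on the left and multiplying through by $n$ yields $\sum_{i=1}^n f(x_i) \geq n f(y/n)$, and equality is realized by the uniform candidate. Hence that candidate is a minimizer, as claimed.

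There is no real obstacle here: the lemma is essentially Jensen's inequality in disguise. The only subtle point worth flagging is that the statement reads ``a solution is'' rather than ``the unique solution is''; without strict convexity, other minimizers may exist (for instance if $f$ is affine on some interval containing $y/n$). An alternative route would be a Lagrange multiplier argument, whose stationarity condition $f'(x_i) = \lambda$, independent of $i$, forces the $x_i$ to coincide whenever $f'$ is injective (equivalently, $f$ is strictly convex and differentiable). I would nevertheless favor the Jensen-based proof, since it avoids any smoothness or strictness hypothesis on $f$ and proves exactly what the lemma asserts.
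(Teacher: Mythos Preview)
Your proof is correct and is essentially identical to the paper's own argument: the paper also invokes the convexity (Jensen) inequality $\frac{1}{n}\sum_i f(x_i) \geq f\!\left(\frac{1}{n}\sum_i x_i\right)$, substitutes the constraint, and multiplies by $n$ to obtain $\sum_i f(x_i) \geq n f(y/n)$. Your additional remarks on feasibility, non-uniqueness without strict convexity, and the Lagrange-multiplier alternative are all sound but not present in the paper.
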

\begin{proof}
	This follows directly from the definition of convexity:
	\begin{equation*}
		\sum_{i=1}^n f\left(x_i\right) = n\left(\frac{1}{n} \sum_{i=1}^n f\left(x_i\right)\right) \geq n f\left(\frac{1}{n} \sum_{i=1}^n x_i\right) = n f\left(\frac{y}{n}\right).
	\end{equation*}
\end{proof}
The equipartition property is important when considering polymer network models because of the results that follow.
A standard assumption for isotropic polymer networks is that all of the chains effectively have the same length end-to-end vectors in the stress free, undeformed configuration.
Let this length be denoted by $\Rmag$.
For continuous network models (e.g. \fref{fig:classical-network-models}.d), let $\probDens\left( \uVec \right)$ be the probability density of a chain with $\Rvec = \Rmag \uVec$ where $\uVec \in \unitSphere$ and $\unitSphere$ is the unit sphere.
\hl{Before proceeding, we also introduce the $6$-chain RVE (\fref{fig:six-chain}) because, although its elastic properties are equivalent to the $3$-chain RVE (indeed, $\chainStretch_i^2 = \left(-\Rvec_i\right)\cdot\changeCoord{\cGreen}\left(-\Rvec_i\right) / \cLen^2 = \Rvec_i\cdot\changeCoord{\cGreen}\Rvec_i / \cLen^2$), its strain gradient and stimuli-responsive properties may be different.
For example, if the polymer chains in the network have a electric (or magnetic) dipole which is linear in $\Rvec$ (e.g.~\citet{grasingerIPflexoelectricity}) then the polarization (or magnetization) of the $3$- and $6$-chain RVEs will vary.
}
\begin{proposition}[Conservation property] \label{prop:stretch-conservation}
	The sum of squares of the chain stretches, i.e. $\sum_{i=1}^{\numChains} \chainStretch_i^2$, is conserved relative to general rotations \begin{enumerate}[a)] \item for all of the $3$, $4$, $6$, and $8$-chain RVEs and \item provided the RVE has reflection symmetries about planes (passing through the origin) normal to three orthogonal directions, $\nuVec_j, \: j = 1, 2, 3$, and, $\sum_{i=1}^{\numChains} \left(\Rvec_i \cdot \nuVec_j\right)^2 = C$ for all $j$.
	For continuous networks, $\int_{\unitSphere} \df{A} \: \probDens\left(\uVec\right) \chainStretch^2\left(\uVec\right)$ is conserved provided $\int_{\unitSphere} \df{A} \: \probDens\left(\uVec\right) \left(\uVec \cdot \nuVec_j\right)^2 = C$ for all $j$.
	\end{enumerate}
\end{proposition}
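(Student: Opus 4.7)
The plan is to rewrite $\sum_i \chainStretch_i^2$ as a bilinear form in $\cGreen$ mediated by a \emph{structure tensor} $\tens{S} := \sum_{i=1}^{\numChains} \Rvec_i \otimes \Rvec_i$, and to show that under the stated hypotheses $\tens{S}$ is a scalar multiple of the identity. Invariance under arbitrary $\genRot \in \SOThree$ then follows immediately from the orthogonality of $\genRot$.

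For setup, in the free-rotation framework a reference vector $\Rvec_i$ is deformed to $\F\genRot\Rvec_i$, so the squared stretch is $\chainStretch_i^2 = (1/\cLen^2)\,\Rvec_i \cdot \genRot^T \cGreen \genRot \Rvec_i$. Summing and rewriting the quadratic form as a trace,
\begin{equation*}
\sum_{i=1}^{\numChains} \chainStretch_i^2 \;=\; \frac{1}{\cLen^2}\, \trace\!\left(\cGreen\, \genRot\, \tens{S}\, \genRot^T\right).
\end{equation*}
If $\tens{S} = C\identity$, then $\genRot\tens{S}\genRot^T = C\,\genRot\genRot^T = C\identity$ for every $\genRot \in \SOThree$, and the sum collapses to $(C/\cLen^2)\trace(\cGreen)$, which is manifestly $\genRot$-independent.

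The content of the proposition is therefore the claim $\tens{S} = C\identity$. For hypothesis (b), I would evaluate $\tens{S}$ in the orthonormal basis $\{\nuVec_1,\nuVec_2,\nuVec_3\}$. The diagonal entries $\nuVec_j\cdot\tens{S}\,\nuVec_j = \sum_i (\Rvec_i\cdot\nuVec_j)^2 = C$ agree by assumption. For $j\neq k$, reflection symmetry about the plane normal to $\nuVec_j$ means the collection $\{\Rvec_i\}$ (read modulo the sign ambiguity $\Rvec_i \leftrightarrow -\Rvec_i$, which leaves every dyad $\Rvec_i\otimes\Rvec_i$ fixed) is stable under $\Rvec \mapsto \Rvec - 2(\Rvec\cdot\nuVec_j)\nuVec_j$. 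Pairing each term with its reflected image flips the sign of $(\Rvec_i\cdot\nuVec_j)(\Rvec_i\cdot\nuVec_k)$ without changing the index set, so the off-diagonal entries vanish. Part (a) then reduces to verifying (b) case by case: for the 3-, 6-, and 8-chain RVEs one takes $\nuVec_j = \euclid{j}$ and reads off $C$ directly; for the 4-chain RVE one uses the orthogonal triad of edge-midplane normals of the regular tetrahedron. The continuous version is identical with $\tens{S}$ replaced by $\tens{T} := \int_{\unitSphere}\df{A}\,\probDens(\uVec)\,\uVec\otimes\uVec$, and the reflection argument applied to the integrand (using reflection-invariance of $\probDens$ in place of the discrete pairing).

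I expect the chief subtlety to be the sign-ambiguity step: in the 3-chain RVE, reflection of $\Rvec_1 = \Rmag\euclid{1}$ across the plane normal to $\euclid{1}$ yields $-\Rmag\euclid{1}$, which is not literally in $\{\Rvec_i\}$, so the reflection is not a permutation of the chain multiset. The resolution is that $\tens{S}$ depends on each chain only through $\Rvec_i\otimes\Rvec_i$ and is therefore insensitive to the sign of $\Rvec_i$; it is the \emph{undirected} chain configuration, not the multiset of vectors, that must carry the reflection symmetry. With that clarification the argument applies uniformly across the 3-, 4-, 6-, and 8-chain RVEs and the continuous case, closing the proposition.
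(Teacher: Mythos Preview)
Your proof is correct and rests on the same computation as the paper's: both show that $\sum_i \chainStretch_i^2$ equals a constant times $\trace\cGreen$, with the reflection symmetries killing the off-diagonal contributions. Your structure-tensor packaging $\tens{S}=\sum_i\Rvec_i\otimes\Rvec_i$ and the identity $\sum_i\chainStretch_i^2 = \cLen^{-2}\trace(\cGreen\,\genRot\tens{S}\genRot^{T})$ is a cleaner statement of what the paper does coordinate-wise in part~(b), but the content is the same.

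Where you genuinely diverge is part~(a). The paper simply lists the explicit $\Rvec_i$ for each RVE and verifies $\sum_i\Rvec_i\cdot\changeCoord\cGreen\Rvec_i = \text{const}\times\trace\changeCoord\cGreen$ by direct substitution (backed by a Mathematica notebook). You instead deduce~(a) from~(b) by checking that each RVE, read as an \emph{undirected} line configuration, carries the three orthogonal reflection symmetries. This is a real economy---the paper even remarks that (a) and (b) overlap but elects to treat them separately. The cost is that your $4$-chain argument requires orienting the tetrahedron along its three $2$-fold axes (equivalently, inscribing it in a cube so the chain lines become body diagonals); in the paper's coordinates for the $4$-chain RVE those axes are not $\euclid{j}$, so your phrase ``edge-midplane normals'' needs to be unpacked for a reader to follow. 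Your sign-ambiguity observation is essential and correctly flagged: without it, neither the $3$-chain nor the $4$-chain RVE literally satisfies the reflection hypothesis of~(b).
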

While there is some overlap between a) and b), they are considered separately because a) establishes the conservation property for well known discrete polymer networks while b) is a sufficient condition for constructing new polymer network models with the same conservation property.
\begin{proof}
	\emph{a)} Let the coordinate systems for the RVEs be chosen such that
	\begin{subequations} \label{eq:Rvecs}
	\begin{equation} \label{eq:three-chain-Rvecs}
		\Rvec_i = \Rmag \euclid{i}, \quad i = 1, 2, 3,
	\end{equation}
	for the $3$-chain RVE;
	\begin{equation} \label{eq:four-chain-Rvecs}
		\begin{split}
		\Rvec_1 &= \Rmag \left(0, 0, 1\right), \quad \quad
		\Rvec_2 = \Rmag \left(0, \frac{2\sqrt{2}}{3}, -\frac{1}{3}\right) \\
		\Rvec_3 &= \Rmag \left(\sqrt{\frac{2}{3}}, -\frac{\sqrt{2}}{3}, -\frac{1}{3}\right), \quad \quad
		\Rvec_4 = \Rmag \left(-\sqrt{\frac{2}{3}}, -\frac{\sqrt{2}}{3}, -\frac{1}{3}\right)
		\end{split}
	\end{equation}
	for the $4$-chain RVE;
	\begin{equation} \label{eq:six-chain-Rvecs}
		\Rvec_i = \begin{cases}
	   \; \; \, \Rmag \euclid{i}, &\quad i = 1, 2, 3 \\
		       -\Rmag \euclid{i-3}, &\quad i = 4, 5, 6
		\end{cases}
	\end{equation}
	for the $6$-chain RVE (shown in \fref{fig:six-chain});
	and
	\begin{equation} \label{eq:eight-chain-Rvecs}
		\Rvec_i = \frac{\Rmag}{\sqrt{3}} \left(\pm 1, \pm 1, \pm 1 \right), \quad i = 1, 2, \dots, 8
	\end{equation}
	\end{subequations}
	for the $8$-chain RVE (see \fref{fig:classical-network-models}, for example).
	In each case,
	\begin{equation} \label{eq:sum-of-sq-strs}
		\sum_{i=1}^{\numChains} \chainStretch_i^2 = \frac{1}{\cLen^2} \sum_{i=1}^{\numChains} \left(\F \genRot \Rvec_i\right) \cdot \left(\F \genRot \Rvec_i\right) = \frac{1}{\cLen^2} \sum_{i=1}^{\numChains} \Rvec_i \cdot \changeCoord{\cGreen} \Rvec_i.
	\end{equation}
	where again by $\changeCoord{\cGreen}$ we mean a similarity transformation of $\cGreen$.
	Using \eqref{eq:Rvecs} and \eqref{eq:sum-of-sq-strs}, it can be shown directly that
	\begin{equation}
		\sum_{i=1}^{\numChains} \chainStretch_i^2 = \text{const} \times \trace \changeCoord{\cGreen}
	\end{equation}
	for each case, which is invariant with respect to $\genRot$, as desired\footnote{\hl{The algebra is straight forward and was verified in the Mathematica notebook, \texttt{MPS-D-22-01011.nb}, which can be found at \url{https://github.com/grasingerm/MPS-D-22-01011/}}}.
	
	\emph{b)} First consider discrete networks.
	Expand each $\Rvec$ in the orthonormal basis, $\nuVec_j, \: j = 1, 2, 3$.
	Let $\Rvec_k = \Rmag_{k1} \nuVec_1 + \Rmag_{k2} \nuVec_2 + \Rmag_{k3} \nuVec_3$ where $\Rmag_{kj} = \left(\Rvec_k \cdot \nuVec_j\right)$. Then
	\begin{equation}
		\sum_{k=1}^{\numChains} \chainStretch_k^2 = \frac{1}{\cLen^2} \sum_{k=1}^{\numChains} \Rvec_k \cdot \changeCoord{\cGreen} \Rvec_k = \frac{1}{\cLen^2} \sum_{i=1,j=1}^{3,3} \sum_{k=1}^{\numChains} \Rmag_{ki} \changeCoord{\changeCoord{\cGreenSym}}_{ij} \Rmag_{kj} = \frac{1}{\cLen^2} \sum_{i=1}^{3} \changeCoord{\changeCoord{\cGreenSym}}_{ii} \sum_k^{\numChains} \Rmag_{ki}^2.
	\end{equation}
	where $\changeCoord{\changeCoord{\cGreen}}$ is a similarity transformation of $\changeCoord{\cGreen}$ (and, consequently, is a similarity transformation of $\cGreen$) and the last step is due to the reflection symmetries.
	Clearly $\sum_{i=1}^{\numChains} \chainStretch_i^2 = \left(C / \cLen^2\right) \Tr \changeCoord{\changeCoord{\cGreen}} = \left(C / \cLen^2\right) \Tr \cGreen$, as desired.
	Similarly for continuous networks\footnote{\hl{A similar result, restricted to the case of a uniform distribution on the unit sphere, was previously shown by~\citet{kearsely1989strain} and again later by~\citet{beatty2003average}.}}, as a consequence of the assumed symmetries,
	\begin{equation}
		\int_{\unitSphere} \df{A} \: \probDens\left(\uVec\right) \chainStretch^2\left(\uVec\right) = \frac{\Rmag^2}{\cLen^2} \int_{\unitSphere} \df{A} \: \probDens\left(\uVec\right) \left(\uVec \cdot \changeCoord{\cGreen} \uVec\right) = \frac{C}{\cLen^2} \Tr \changeCoord{\cGreen}.
	\end{equation}
\end{proof}

\begin{remark}
	The $8$-chain RVE using the free rotation assumption recovers the Arruda-Boyce $8$-chain model (i.e. $8$-chain RVE using the principal frame assumption) whenever \begin{inparaenum}[1)] \item the chain free energy is a convex function of $\chainStretch^2$ and \item the chain free energy is only a function of $\chainStretch$ (i.e. does not depend on $\rdir$). \end{inparaenum}
	This follows from the equipartition property, \fref{prop:stretch-conservation}, and the formula for the classical $8$-chain constitutive model \eqref{eq:8-chain}.
\end{remark}

\begin{figure}[htb]
    \centering
    \includegraphics[width=0.5\linewidth]{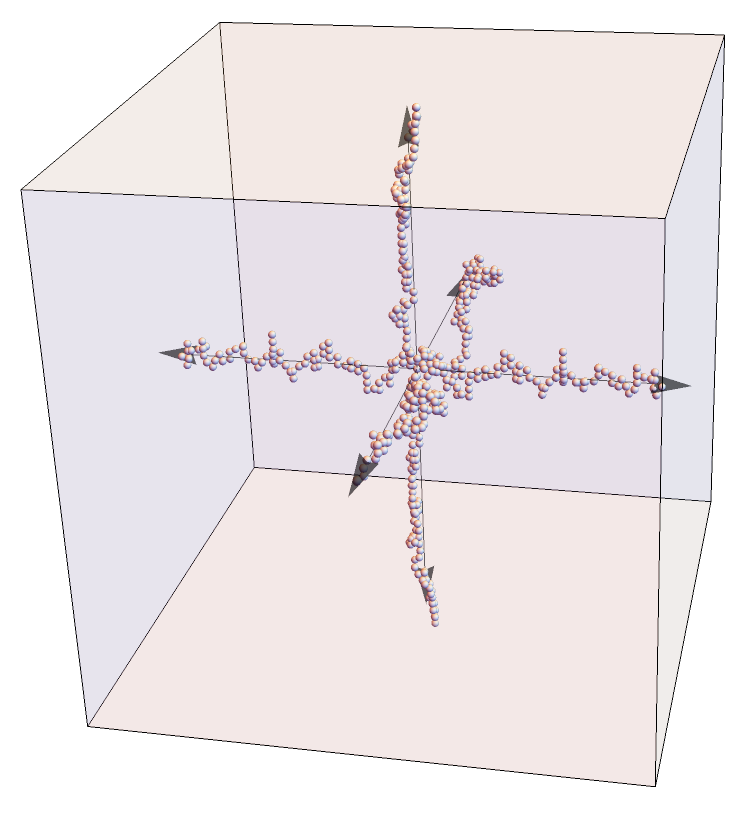}
    \caption{\hl{The $6$-chain RVE is introduced here because although its elastic properties are equivalent to the $3$-chain RVE, regardless of frame (indeed, $\chainStretch_i^2 = \left(-\Rvec_i\right)\cdot\changeCoord{\cGreen}\left(-\Rvec_i\right) / \cLen^2 = \Rvec_i\cdot\changeCoord{\cGreen}\Rvec_i / \cLen^2$), its strain gradient and stimuli-responsive properties may be different.
    For example, if the polymer chains in the network have an electric (or magnetic) dipole which is linear in $\Rvec$ (e.g.~\citet{grasingerIPflexoelectricity}) then the polarization (or magnetization) of the $3$- and $6$-chain RVEs will vary.
    }}
    \label{fig:six-chain}
\end{figure}

\begin{proposition} \label{prop:chain-free-energies-convex}
	The Gaussian chain (GC), the Kuhn and Gr\"{u}n chain (KGC), and the wormlike chain (WLC) free energies are all convex functions of $\cStrSqr = \chainStretch^2 \in \left[0, 1\right]$.
\end{proposition}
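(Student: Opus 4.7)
My plan is to reduce each of the three cases to a scalar inequality on $(0,1)$ via the same change of variables. Setting $h(\cStrSqr) := \chainFreeEnergy(\sqrt{\cStrSqr})$ and applying the chain rule twice gives
\begin{equation*}
h''(\cStrSqr) = \frac{1}{4\chainStretch^2}\left(\chainFreeEnergy''(\chainStretch) - \frac{\chainFreeEnergy'(\chainStretch)}{\chainStretch}\right), \qquad \chainStretch = \sqrt{\cStrSqr},
\end{equation*}
so convexity in $\cStrSqr$ is equivalent to $\chainFreeEnergy''(\chainStretch) \geq \chainFreeEnergy'(\chainStretch)/\chainStretch$ on $(0,1)$. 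The GC case is immediate because $\GaussFreeEnergy = \frac{3}{2}\n \kB \T \cStrSqr$ is already linear in $\cStrSqr$. For the WLC, I would substitute \eqref{eq:WLC} directly into the inequality; after combining terms over the common denominator $\chainStretch(1-\chainStretch)^3$, the numerator collapses to $\chainStretch^2(3-\chainStretch)$, yielding
\begin{equation*}
\WLCFreeEnergy''(\chainStretch) - \frac{\WLCFreeEnergy'(\chainStretch)}{\chainStretch} = \frac{\kB \T \cLen \, \chainStretch(3-\chainStretch)}{4\pLen(1-\chainStretch)^3} > 0 \quad \text{on } (0,1).
\end{equation*}

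The main obstacle is the KGC, since $\invLang$ has no elementary closed form. My plan is to reparametrize by $x := \invLang(\chainStretch) \in (0,\infty)$, so that $\chainStretch = \coth x - 1/x$. Differentiating \eqref{eq:KGC} then gives clean expressions: the $\log$ terms telescope thanks to the identity $\chainStretch + 1/x - \coth x \equiv 0$, leaving $\KGFreeEnergy'(\chainStretch) = \n \kB \T \, x$, and hence $\KGFreeEnergy''(\chainStretch) = \n \kB \T / \Lang'(x)$. The reduced inequality becomes $\Lang(x) \geq x\,\Lang'(x)$; multiplying through by $x \sinh^2 x > 0$ and using $\Lang'(x) = 1/x^2 - \csch^2 x$ converts this into the purely algebraic claim
\begin{equation*}
g(u) := u \sinh u + u^2 - 4\cosh u + 4 \geq 0, \quad u := 2x > 0.
\end{equation*}

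I would close the KGC case by Maclaurin expansion. Writing $u\sinh u = \sum_{m\geq 1} u^{2m}/(2m-1)!$ and $4\cosh u = 4 + 4\sum_{m\geq 1}u^{2m}/(2m)!$, one checks that the coefficients of $u^2$ and $u^4$ cancel exactly, while for every $m \geq 3$ the coefficient of $u^{2m}$ simplifies to $(2m-4)/(2m)! \geq 0$. Hence $g \geq 0$ term by term on $\Reals$, which completes the proof. As a backup, a short bootstrap also works: $g, g', g'', g'''$ all vanish at $0$ and $g^{(4)}(u) = u \sinh u \geq 0$ on $[0,\infty)$, so repeated integration delivers $g \geq 0$.
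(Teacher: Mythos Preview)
Your proposal is correct. For the GC and WLC cases you follow essentially the same route as the paper (compute the second derivative with respect to $\cStrSqr$ and verify positivity); your WLC computation is just a tidier rearrangement of the same three terms the paper writes down.

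For the KGC case your argument is genuinely different from the paper's and, in fact, stronger. The paper reduces to the same scalar inequality $(\invLang)'(\chainStretch) \geq \invLang(\chainStretch)/\chainStretch$ but then establishes it only \emph{approximately}: first by plotting the quantity numerically, and then by invoking a piecewise $59$th-order Taylor/Bergstr\"om approximation of $\invLang$ whose coefficients happen to all be positive. Your reparametrization $x=\invLang(\chainStretch)$ turns the claim into the exact inequality $\Lang(x)\geq x\,\Lang'(x)$, and your Maclaurin argument for $g(u)=u\sinh u+u^{2}-4\cosh u+4$ (with the $u^{2}$ and $u^{4}$ coefficients cancelling and the $u^{2m}$ coefficient equal to $(2m-4)/(2m)!\geq 0$ for $m\geq 3$) gives a fully rigorous proof with no numerical or approximation step. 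The fourth-derivative bootstrap you mention is an equally clean alternative. The tradeoff is that the paper's approach requires no identities for $\Lang$ and is immediately transparent once one trusts the approximation, whereas yours demands a short hyperbolic computation but delivers an unconditional result.
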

\begin{proof}
\item \paragraph{GC case} The GC is a linear function of $\chainStretch^2$.
	
\item \paragraph{WLC case} 
	Physically, it assumed that $\pLen \kB \T > 0$.
	Then
	\begin{equation}
			\derivN{\WLCFreeEnergy}{\cStrSqr}{2} = \pLen \kB \T \left(\frac{1}{4 \cStrSqr^{3/2}} - \frac{1}{4\left(1 - \sqrt{\cStrSqr}\right)^2 \cStrSqr^{3/2}} + \frac{1}{2\left(1-\sqrt{\cStrSqr}\right)^3 \cStrSqr}\right) 
			\geq \frac{\pLen \kB \T }{2\left(1-\sqrt{\cStrSqr}\right)^3 \cStrSqr} > 0.
	\end{equation}
	
\item \paragraph{KGC case} The KGC case is more difficult because an explicit expression for the inverse Langevin function, $\invLang$, does not exist.
	In terms of $\invLang$ and its derivative, $\left(\invLang\right)'$, we have that
	\begin{equation}
		\derivN{\KGFreeEnergy}{\cStrSqr}{2} = \frac{\kB \T}{4 \mLen \cStrSqr} \left(\left(\invLang\right)'\left(\sqrt{\cStrSqr}\right) - \frac{\invLang\left(\sqrt{\cStrSqr}\right)}{\sqrt{\cStrSqr}}\right).
	\end{equation}
	By assumption, $\kB \T / 4 \mLen \cStrSqr \geq 0$; so it suffices to show that the quantity $\left(\invLang\right)'\left(\sqrt{\cStrSqr}\right) - \invLang\left(\sqrt{\cStrSqr}\right) / \sqrt{\cStrSqr}$ is nonnegative.
	First, this is established numerically in~\fref{fig:KGC-convexity}.
	This can also be (approximately) supported analytically as follows.
	We use the approximation
	\begin{equation}
		\invLang\left(x\right) \approx \begin{cases}
			C_1 x + C_3 x^3 + \dots + C_{59} x^{59} & x \in [0, 0.843951) \\
			\frac{1}{1 - x} & x \in [0.843951, 1]
		\end{cases}
	\end{equation}
	which has negligible error \cite{bergstrom1999large,itskov2012taylor} (maximum absolute error $\lessapprox 10^{-8}$, maximum relative error $\leq 0.064\%$) and where, importantly, $C_j > 0$ for all $j = 1, 3, \dots, 59$~\footnote{See \cite{itskov2012taylor} for the exact coefficients.}.
	For the $\cStrSqr \in [0, 0.843951)$ case,
	\begin{equation}
		\left(\invLang\right)'\left(\sqrt{\cStrSqr}\right) - \frac{\invLang\left(\sqrt{\cStrSqr}\right)}{\sqrt{\cStrSqr}} = \sum_{k=1}^{30} \left(2k - 2\right)C_{2k-1} \cStrSqr^{k-1} \geq 0,
	\end{equation}
	because each term in the sum is nonnegative.
	Finally, for the $\cStrSqr \in [0.843951, 1]$ case,
	\begin{equation}
		\left(\invLang\right)'\left(\sqrt{\cStrSqr}\right) - \frac{\invLang\left(\sqrt{\cStrSqr}\right)}{\sqrt{\cStrSqr}} = \left(\frac{1}{\left(1 - \sqrt{\cStrSqr}\right)^2} - \frac{1}{\sqrt{\cStrSqr}\left(1 - \sqrt{\cStrSqr}\right)}\right) > 0,
	\end{equation}
	because $\sqrt{\cStrSqr} > \left(1 - \sqrt{\cStrSqr}\right)$.
	\begin{figure}[htb]
		\centering
		\includegraphics[width=\singleGraphWidth]{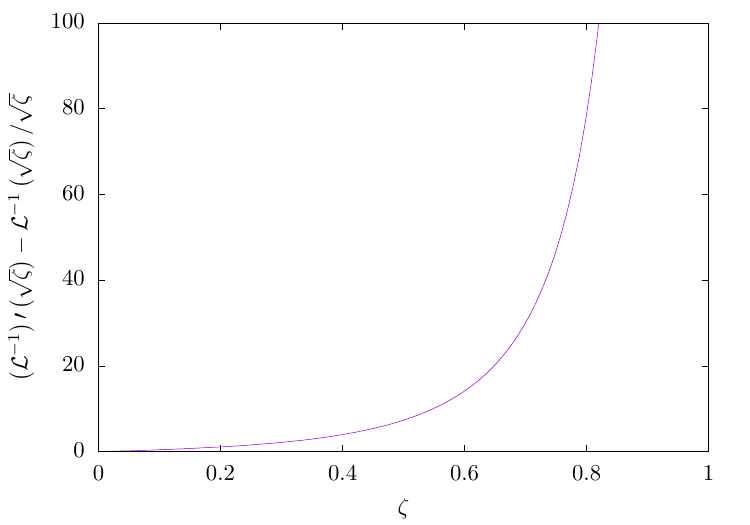}
		\caption{The quantity is nonnegative which numerically establishes the convexity of $\KGFreeEnergy$ as a function of $\cStrSqr \left(= \chainStretch^2\right)$.}
		\label{fig:KGC-convexity}
	\end{figure}
\end{proof}

\begin{proposition}[Unification of discrete polymer network models] \label{prop:unification}
	\begin{hlbreakable}
	Let \begin{equation} \label{eq:stretch-star}
		\chainStretchStar = \frac{\Rmag}{\cLen} \sqrt{\frac{\pStretch{1}^2+\pStretch{2}^2+\pStretch{3}^2}{3}}.
	\end{equation}
	\end{hlbreakable}
	Then given a polymer network which consists of chains whose free energy is \begin{inparaenum}[1)] \item a convex function of $\chainStretch^2$ and \item is only a function of $\chainStretch$ (i.e. does not depend on $\rdir$), \end{inparaenum}
	the $3$-chain, $4$-chain, $6$-chain, and $8$-chain RVEs using the free rotation assumption all produce the same constitutive model \begin{hlbreakable} because for each RVE there exists a $\genRotStar \in \SOThree$ such that
		\begin{equation} \label{eq:partition-stretch}
			\chainStretch_i = \frac{\left|\F \genRotStar \rvec_i\right|}{\cLen} = \chainStretchStar
		\end{equation}
		for all $i = 1, \dots, \numChains$ (where, recall, $\numChains$ is the number of chains in the RVE).
	\end{hlbreakable}
\end{proposition}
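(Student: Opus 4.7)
The plan is to prove that the minimum of $\innerFreeEnergyDensity(\F, \genRot)$ over $\genRot \in \SOThree$ equals $\chainDensity \chainFreeEnergy(\chainStretchStar)$ for each of the four RVEs by combining the three results already established: the conservation property (\fref{prop:stretch-conservation}) fixes $\sum_{i=1}^{\numChains} \chainStretch_i^2$ regardless of $\genRot$; convexity of $\chainFreeEnergy$ in $\chainStretch^2$ (\fref{prop:chain-free-energies-convex}) puts us in the setting of \fref{lem:equipartition}; and the lemma then identifies the equidistributed stretches as the minimizer of $\sum_i \chainFreeEnergy(\chainStretch_i^2)$ whenever that equidistribution is attainable by some rotation.

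First, I would evaluate the conserved sum $\sum_i \chainStretch_i^2$ for each RVE using \fref{prop:stretch-conservation} and the explicit end-to-end vectors in \eqref{eq:Rvecs}. A direct computation for each of the $3$-, $4$-, $6$-, and $8$-chain geometries (all of which satisfy the reflection symmetries with $|\Rvec_i| = \Rmag$) yields the uniform result $\sum_{i=1}^{\numChains} \chainStretch_i^2 = \numChains (\Rmag^2 / 3\cLen^2) \Tr \cGreen = \numChains \chainStretchStar^2$. Writing $\chainFreeEnergy(\chainStretch) = f(\chainStretch^2)$ with $f$ convex, \fref{lem:equipartition} then yields the lower bound $\innerFreeEnergyDensity(\F, \genRot) = (\chainDensity/\numChains) \sum_i f(\chainStretch_i^2) \geq \chainDensity f(\chainStretchStar^2)$ for every $\genRot$, with equality iff all the $\chainStretch_i^2$ coincide.

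The main obstacle is therefore showing that equidistribution is actually attainable by some $\genRotStar \in \SOThree$ in each case, since the equipartition lemma guarantees only that equal stretches would be optimal if feasible. For the $8$-chain RVE, the principal frame choice $\genRotStar = \pFrame$ diagonalizes $\changeCoord{\cGreen}$, and the sign pattern in \eqref{eq:eight-chain-Rvecs} causes the off-diagonal contributions to $\Rvec_i \cdot \changeCoord{\cGreen} \Rvec_i$ to cancel, leaving every $\chainStretch_i^2 = \chainStretchStar^2$. For the $3$- and $6$-chain RVEs, equipartition reduces to producing a rotation whose conjugation leaves $\cGreen$ with three equal diagonal entries; this is a Schur-Horn-type statement and can be built explicitly, for instance by composing the principal-frame rotation with a rotation about the $(1,1,1)$ body diagonal whose angle is tuned so that all three diagonal entries coincide at $\Tr(\cGreen)/3$. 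The $4$-chain (tetrahedral) case is the trickiest: the conditions that $\Rvec_i \cdot \changeCoord{\cGreen} \Rvec_i$ be independent of $i$ give three scalar equations on the three-parameter rotation group, and solvability can be deduced either by exploiting the tetrahedral symmetry with a continuity/intermediate-value argument (rotating from the principal frame through a family in which the four quantities continuously average to their common value) or by a direct Rodrigues-vector construction. Once $\genRotStar$ is secured in each case, the lower bound is saturated, giving $\freeEnergyDensity(\F) = \chainDensity \chainFreeEnergy(\chainStretchStar)$ uniformly across all four RVEs and matching the classical $8$-chain form \eqref{eq:8-chain}.
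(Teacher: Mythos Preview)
Your overall strategy---use \fref{prop:stretch-conservation} to fix $\sum_i \chainStretch_i^2$, invoke convexity and \fref{lem:equipartition} for the lower bound, then exhibit an attaining $\genRotStar$ for each RVE---is exactly the paper's approach, and your treatment of the $8$-chain case via $\genRotStar = \pFrame$ matches the paper verbatim.

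There is, however, a concrete error in your proposed $3$/$6$-chain construction. A single rotation about the $(1,1,1)$ body diagonal applied after $\pFrame$ does \emph{not} in general equalize the three diagonal entries of $\changeCoord{\cGreen}$. For the diagonal matrix $\diag(0,0,3)$, equalization would require the third row of the rotation to have all entries $\pm 1/\sqrt{3}$; but for a rotation by $\varphi$ about $(1,1,1)/\sqrt{3}$ one finds $R_{31} = -\tfrac{1}{\sqrt{3}}\sin\varphi + \tfrac{1}{3}(1-\cos\varphi)$, $R_{32} = \tfrac{1}{\sqrt{3}}\sin\varphi + \tfrac{1}{3}(1-\cos\varphi)$, $R_{33} = \tfrac{1}{3}(1+2\cos\varphi)$, and $R_{31}^2 = R_{32}^2$ forces $\varphi \in \{0,\pi\}$, neither of which gives $R_{33}^2 = 1/3$. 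You are trying to satisfy two independent equations with one parameter. Your Schur--Horn reference is correct for existence, and the paper's argument is in that spirit but constructive in a different way: it rotates about a single coordinate axis to swap two diagonal entries continuously, applies the intermediate value theorem to equalize that pair, and iterates over axes (citing \cite{horn1985matrix}).

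For the $4$-chain case you correctly flag the difficulty but leave the construction open. The paper closes it by writing $\genRotStar = \genRot' \pFrame$ and parameterizing $\genRot'$ via Euler angles $(\eulerx,\eulery,\eulerz)$; one then checks directly that $\eulerx = \pi/4$, $\eulery = \arccos\sqrt{2/3}$, $\eulerz = -\pi/2$ equalizes all four chain stretches. Equivalently (as the paper notes, following \citet{beatty2003average}), this rotation places the four tetrahedral chain directions along four of the cube diagonals of the $8$-chain RVE in the principal frame, which immediately gives $\chainStretch_i = \chainStretchStar$ for all $i$.
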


\begin{proof}
	\item \paragraph{\hl{$8$-chain case}}
	\hl{As discussed previously, a $\genRotStar$ which satisfies \eqref{eq:partition-stretch} exists for the $8$-chain RVE and is given by $\genRotStar = \pFrame$ where $\pFrame$ rotates the RVE into the principal frame (see \eqref{eq:8-chain})}.
	
	\item \paragraph{\hl{$3$-chain and $6$-chain cases}}
	To show that the $3$-chain and $6$-chain RVEs produce equivalent constitutive models, it suffices to show that there also exists a $\genRotStar$ with the same property; that is, a $\genRotStar$ such that all of the chains in the RVE have the same stretch given by \eqref{eq:partition-stretch}.
	The stretch for each chain in the RVE can be formulated as
	\begin{equation}
		\chainStretch_i = \frac{\left|\rvec_i\right|}{\cLen} = \frac{1}{\cLen} \sqrt{\Rvec_i \cdot \changeCoord{\cGreen} \Rvec_i}, \quad \text{ where } \changeCoord{\cGreen} = \trans{\genRot} \cGreen \genRot
	\end{equation}
	is a real proper orthogonal similarity transformation of $\cGreen$.
	There exists a real proper orthogonal similarity transformation where all of the elements on the diagonal are equal~\cite{horn1985matrix}, which for the $3$- and $6$-chain RVEs achieves equivalent stretches for all of the chains in the RVE, as desired.
	\hl{
		The existence of this similarity transformation can be understood as follows.
		Consider rotating the coordinate system about $\euclid{3}$ by $\varphi$.
		One can permute $\changeCoord{\cGreenSym}_{11}$ and $\changeCoord{\cGreenSym}_{22}$ by taking $\varphi = \pi / 2$.
		Further, this transformation of $\changeCoord{\cGreenSym}_{11}$ and $\changeCoord{\cGreenSym}_{22}$ is continuous with respect to $\varphi$ so that there exists a $\varphi$ such that $\changeCoord{\cGreenSym}_{11} = \changeCoord{\cGreenSym}_{22}$.
		Similar arguments can be made about $\changeCoord{\cGreenSym}_{11}$ and $\changeCoord{\cGreenSym}_{33}$ (by rotating about $\euclid{2}$), and about $\changeCoord{\cGreenSym}_{22}$ and $\changeCoord{\cGreenSym}_{33}$ (by rotating about $\euclid{1}$).
		Thus, there exists a $\genRotStar$ such that $\changeCoord{\cGreenSym}_{11} = \changeCoord{\cGreenSym}_{22} = \changeCoord{\cGreenSym}_{33} = 1 / 3 \Tr \cGreen$.
		Clearly, in this case, the chain stretches--for both the $3$- and $6$-chain models--satisfy \eqref{eq:partition-stretch}.
	}

	\begin{hlbreakable}
	\item \paragraph{$4$-chain case}
	To show the $4$-chain case, we decompose the rotation of interest $\genRotStar$ as $\genRotStar = \genRot' \pFrame$ where, recall, $\pFrame$ rotates the Euclidean basis to align with the principal frame such that $\changeCoord{\cGreen} = \diag \left(\pStretch{1}^2, \pStretch{2}^2, \pStretch{3}^2 \right)$.
	Then it suffices to show that $\genRot'$ exists.
	Here it is solved for explicitly.
	Let $\genRot' = \genRot\left(\eulerx \euclid{1}\right) \genRot\left(\eulery \euclid{2}\right) \genRot\left(\eulerz \euclid{3}\right)$; that is, we formulate $\genRot'$ using the Euler angle representation and $\eulerx$, $\eulery$, and $\eulerz$ are the (Euler) angles of rotation about $\euclid{1}$, $\euclid{2}$, and $\euclid{3}$, respectively.
	Then the chain with end-to-end vector $\Rvec_1 \left(= \Rmag \left(0, 0, 1\right)\right)$ (see \eqref{eq:four-chain-Rvecs}) is deformed such that $\Rvec_1 \rightarrow \F \genRot' \pFrame \Rvec_1$ and
	\begin{equation} \label{eq:stretch-41}
		\chainStretch_1 = \frac{\Rmag}{\cLen} \sqrt{\pStretch{1}^2 \sin^2 \eulery + \pStretch{2}^2 \sin^2 \eulerx \cos^2 \eulery + \pStretch{3}^2 \cos^2 \eulerx \cos^2 \eulery}.
	\end{equation}
	Here $\eulerx$ and $\eulery$ can be chosen such that $\chainStretch_1 = \chainStretchStar$.
	One such solution is $\eulerx = \pi / 4$ and $\eulery = \arccos \sqrt{2 / 3}$.
	Substituting this solution for $\eulerx$ and $\eulery$ into $\genRot'$, we see that $\Rvec_2 \left(= \Rmag \left(0, 2\sqrt{2}/3, -1/3\right)\right)$ is deformed such that
	\begin{equation} \label{eq:stretch-42}
		\chainStretch_2 = \frac{\Rmag}{9 \cLen} \sqrt{
			3\left(\pStretch{1} + 4 \pStretch{1} \sin \eulerz\right)^2 + 
				\pStretch{2}^2\left(6 \cos \eulerz + \sqrt{3}\left(1 - 2\sin \eulerz\right)\right)^2 +
				\pStretch{3}^2\left(6 \cos \eulerz - \sqrt{3}\left(1 - 2\sin \eulerz\right)\right)^2
			}.
	\end{equation}
	Now $\eulerz$ can be chosen such that $\chainStretch_2 = \chainStretchStar$.
	One such solution is $\eulerz = -\pi / 2$.
	Substituting into $\genRot'$ we see that $\chainStretch_1 = \chainStretch_2 = \chainStretch_3 = \chainStretch_4 = \chainStretchStar$~\footnote{\hl{The algebra for the $4$-chain case was verified in the Mathematica notebook, \texttt{MPS-D-22-01011.nb}, which can be found at \url{https://github.com/grasingerm/MPS-D-22-01011/}}}.
	
	A similar proof for the $4$-chain case can be found in \S 4 of~\citet{beatty2003average}. 
	There it is argued (geometrically) that the tetrahedron of the $4$-chain model can be rotated such that its chain directions lie along diagonals of the unit cube whose edges are aligned with the principal frame; such an orientation of the tetrahedron results in $\chainStretch_1 = ... = \chainStretch_4 = \chainStretchStar$.
	The mapping which rotates the $4$-chain RVE into this orientation is $\genRot' \pFrame$, as expected.
	\end{hlbreakable}
\end{proof}

\hl{As verification, we next provide some numerical results for simple deformations which agree with the above proof.}
First consider the incompressible, uniaxial deformation $\F = \diag \left(\pStretchSymbol, 1 / \sqrt{\pStretchSymbol}, 1 / \sqrt{\pStretchSymbol}\right)$.
Let the network consist of KGC chains with $\n = 100$ monomers each.
The free energy density, $\freeEnergyDensity$, as a function of $\pStretchSymbol$ for the $3$-chain, $4$-chain, and $8$-chain RVEs using the free rotation assumption are shown to agree in \fref{fig:UnifyingDiscreteNetworks_Uni}, as expected.
The deformed (and rotated) $3$-chain RVE (inset), $4$-chain RVE (right panel, top row), and $8$-chain RVE (bottom row) for $\pStretchSymbol = 0.5, 0.8,$ and $1.2$ are also shown.
Note that the $3$- and $4$-chain RVEs rotate relative to the coordinate axes but the $8$-chain RVE does not.
That is because, in this case, $\euclid{i}, i = 1, 2, 3$ makes up the principal frame.
\begin{figure}[htb]
	\centering
	\includegraphics[width=\linewidth]{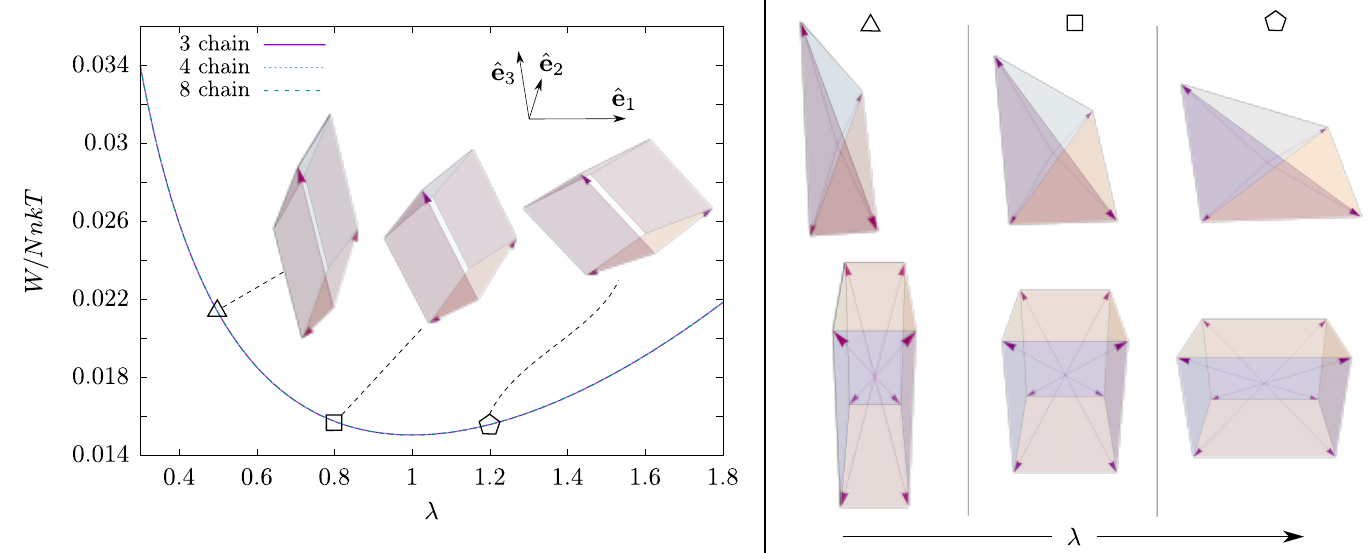}
	\caption{Free energy density as a function of uniaxial stretch, $\pStretchSymbol$, for the $3$-, $4$-, and $8$-chain RVEs with the free rotation assumption.
	Deformed (and rotated) $3$-chain RVE (inset), $4$-chain RVE (right panel, top row), and $8$-chain RVE (bottom row) for $\pStretchSymbol = 0.5, 0.8,$ and $1.2$.
	}
	\label{fig:UnifyingDiscreteNetworks_Uni}
\end{figure}

Next consider the simple shear deformation $\F = \identity + \shear \euclid{1} \otimes \euclid{3}$.
Again, the free energy densities as a function of deformation are shown to agree in \fref{fig:UnifyingDiscreteNetworks_Shear}.
\begin{figure}[htb]
	\centering
	\includegraphics[width=\linewidth]{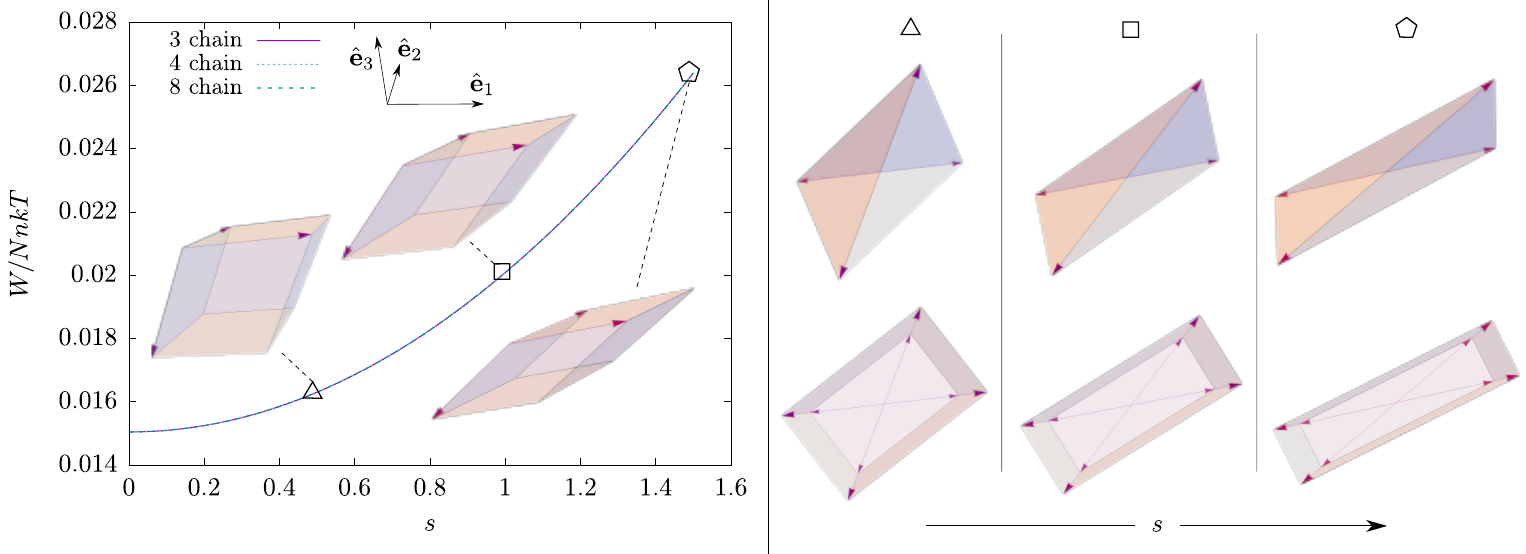}
	\caption{Free energy density as a function of shear, $\shear$, for the $3$-, $4$-, and $8$-chain RVEs with the free rotation assumption.
		Deformed (and rotated) $3$-chain RVE (inset), $4$-chain RVE (right panel, top row), and $8$-chain RVE (bottom row) for $\shear = 0.5, 1.0,$ and $1.5$.
	}
	\label{fig:UnifyingDiscreteNetworks_Shear}
\end{figure}

\begin{remark}
	\Fref{prop:unification} and the numerical results that follow are notable because they suggest that isotropic, homogeneous polymer networks which \begin{inparaenum}[1)] \item are free to locally rotate and \item consist of chains with convex free energies \end{inparaenum} have an elastic response which is insensitive to--and potentially invariant with respect to--network topology (i.e. number of chains joined at each cross-link in the network).
	Common chain free energies are convex functions of $\chainStretch^2$ (\fref{prop:chain-free-energies-convex}).
	Thus, the success of the Arruda-Boyce $8$-chain model with the principal frame assumption may be due to the fact that, coincidentally, the principal frame assumption is equivalent to the free rotation assumption. 
	The topology of the network is perhaps not the fundamental consideration; instead the choice of macro-to-micro kinematic assumption and its correspondence with free energy minimization is more fundamental.
	\hl{Concerning the topology of real networks: for chemical cross-linking, bonding typically occurs between intermediate points of a single pair of macromolecules so that it is natural to assume that the number of chains meeting at a cross-link is $4$~\cite{james1943theory,treloar1975physics}.
	As a result, it seems somewhat standard to assume degree $4$ cross-links (i.e. tetrafunctional cross-links) for many rubbers and elastomers~\cite{treloar1975physics,lei2022network,vieyres2013sulfur}.
	Other topologies, such as degree $3$ (trifunctional) and degree $8$ (octofunctional) cross-links, can also be readily obtained from modern techniques in chemical cross-linking~\cite{gu2019controlling}. 
	All of the above cross-linkers may well approximately satisfy the conditions for equipartition of squared stretch depending on their precise geometry.
	Notably, achieving higher degree cross-links (e.g. $>10$) is considered nontrivial~\cite{gu2019controlling}.
	}
\end{remark}

While the above remark may seem initially surprising given recent results that support the significance of network topology on its bulk, macroscopic properties (e.g. elastic moduli~\cite{lei2022network,alame2019relative}), we note that these results are specific to heterogeneous networks where different cross-links may be of different degree and different chains are generally of different lengths.

\subsection{Discrete vs. continuous polymer network models}

In contrast to discrete polymer network models, there are models which represent the polymer network through a continuous probability density of chains.
The full network model~\cite{wu1993improved} is the simplest such example.
It consists of a uniform distribution of chain end-to-end vectors on the sphere of radius $\Rmag$, i.e. on $\rSphere{\Rmag} = \left\{\Rvec \in \Reals^3 : \left|\Rvec\right| = \Rmag\right\}$ (see \fref{fig:classical-network-models}.d).
Here chain end-to-end vectors are often mapped under $\F$--referred to as the \emph{affine deformation assumption}~\cite{treloar1975physics}--so that the free energy density is
\begin{equation}
	\freeEnergyDensity\left(\F\right) = \frac{\chainDensity}{4\pi} \int_{\unitSphere} \df{A} \: \chainFreeEnergy\left(\frac{\Rmag}{\cLen} \sqrt{\Rdir \cdot \cGreen \Rdir} \right),
\end{equation}
where $\Rdir \in \unitSphere$ is the chain orientation.
It can be argued that, in most cases, the full network model more accurately characterizes our lack of knowledge regarding the directions of chains at any given material point of the solid body.
However, experimental data more closely agrees with the $8$-chain model with the principal frame assumption~\cite{miehe2004micro,arruda1993threee} (and, by extension, the $3$-, $4$-, $6$-, and $8$-chain RVEs with the free rotation assumption).
It is worth considering why.

The standard macro-to-micro mapping for the full network model, $\rvec = \F \Rvec$, is equivalent to the free rotation assumption, $\rvec = \F \genRot \Rvec$ because $\SOThree$ is a subgroup of the symmetries of a sphere.
Instead the principal difference between the full network and discrete networks considered previously (e.g. $3$-chain, $4$-chain, etc.) is that, for the discrete networks considered herein, there exists a $\genRot$ such that all of the chains in the RVE have the same stretch.
By the equipartition property, this has minimal average free energy for many popular chain free energy functions, and for any given $\F$.
Obviously no such $\genRot$ can exist for the full network model; and the full network model only has equal stretch throughout the distribution of chains (i.e. satisfies equipartition of squared stretch) for the special case of $\F = \alpha \identity$.
Thus, in general, the full network model predicts a larger free energy density than any of the discrete networks using the free rotation assumption.
It is possible that the predictions from discrete polymer networks match experiments more closely because they more accurately model a real cross-link which has a finite, discrete number of chains.
An interesting open question is what are the necessary conditions for a polymer network model to be able to satisfy equipartition of squared stretch (e.g. upper bound on the number of chains, symmetries, etc.)?

The microsphere approach~\cite{miehe2004micro} solves this issue for a uniform distribution of chains by modifying the macro-to-micro kinematic assumption to include a ``stretch fluctuation field''.
The fluctuation field, $\flucField : \unitSphere \mapsto \Reals$, allows for fluctuations of macroscopic line-stretch throughout the distribution of chain directions.
Here the chain stretch for a given direction, $\Rdir$, is
\begin{equation}
	\chainStretch\left(\Rdir\right) = \frac{\Rmag}{\cLen} \flucField\left(\Rdir\right) \sqrt{\Rdir \cdot \cGreen \Rdir} \quad \text{ for } \Rdir \in \unitSphere,
\end{equation}
and
\begin{subequations} \label{eq:microsphere}
\begin{align} 
	\freeEnergyDensity\left(\F \right) &= \inf_{\flucField} \avgOverR{\chainFreeEnergy\left(\frac{\Rmag}{\cLen} \flucField\left(\Rdir\right) \sqrt{\Rdir \cdot \cGreen \Rdir}\right)} \\
	\text{subject to } \avgOverR{\sqrt{\Rdir \cdot \cGreen \Rdir}}_p &= \avgOverR{\flucField\left(\Rdir\right) \sqrt{\Rdir \cdot \cGreen \Rdir}}_p,
\end{align}
where
\begin{equation}
	\avgOverR{\generic}_p = \frac{1}{4\pi} \left(\int_{\unitSphere} \df{A} \: \generic^p\right)^{1/p}.
\end{equation}
\end{subequations}
and $p > 0$ is a tunable material parameter.
\citet{miehe2004micro} show that--for chain free energies which only depend on $\chainStretch$ (and not $\rdir$)--two different local extrema of \eqref{eq:microsphere} include $\flucField = \text{const} = 1$ and $\chainStretch = \text{const} = \avgOverR{\sqrt{\Rdir \cdot \cGreen \Rdir}}_p$.
The former also requires the condition $\chainFreeEnergy' = \text{const.} \times \chainStretch^{p-1}$ and is equivalent to the full network model with $\rvec = \F \Rvec$.
The latter \begin{inparaenum}[1)] \item is optimal for convex chain free energies (due to conservation of $\chainStretch^2$), \item corresponds to constant stretch throughout the distribution of chains and, \item when $p = 2$, is equivalent to the discrete polymer network RVEs with the free rotation assumption. \end{inparaenum}

The microsphere model with the stretch fluctuation field is an elegant solution for simultaneously allowing for equipartition of squared stretch while also capturing our general lack of knowledge of the exact chain directions at a given material point.
However, the numerical integration involved is nontrivial~\cite{verron2015questioning} and solving \eqref{eq:microsphere} may be considerably more complex for chain free energies that depend on $\rdir$ as it is a constrained variational problem.
In contrast, the proposed approach of using discrete network models with the free rotation assumption (outlined in \eqref{eq:new-model}) only requires minimization over the compact, finite dimensional \hl{manifold} $\SOThree$.
In addition, the discrete polymer network models have a clear geometric interpretation provided by the RVE itself which is useful for homogenization and modeling strain gradient effects.
As a result, we will not consider the microsphere model further in this work.

\section{Volumetric torque in stimuli-responsive networks} \label{sec:stimuli-responsive}

As mentioned previously, the standard macro-to-micro kinematic assumption for discrete network models, i.e. the principal frame assumption, can make nonintuitive predictions when the polymer chains have orientational contributions to the free energy.
Orientational energies often occur in stimuli-responsive polymer networks such as networks consisting of monomers with magnetic dipoles in an externally applied magnetic field~\cite{moreno2022effects,zhao2019mechanics,bastola2021shape}, and networks consisting of monomers with electric dipoles in an externally applied electric field~\cite{grasinger2020statistical,grasinger2022statistical,cohen2018generalized}. 
Both theory and experiments suggest that these orientational energies can lead to internal torques within the network which can be used for shape morphing~\cite{moreno2022effects,zhao2019mechanics,bastola2021shape}, asymmetric actuation modes~\cite{grasingerIPtorque}, and architected polymer networks with enhanced stimuli-responsive properties~\cite{grasinger2020architected}.
An important consideration then for stimuli-responsive polymers is that macro-to-micro kinematic assumption properly represents--not only the chain stretches, but also--chain orientations for a given $\F$.

As an example stimuli-responsive polymer network, we consider dielectric elastomers.
The mechanical properties of the monomers making up each chain in the network are assumed to follow the freely jointed chain assumptions. 
Additionally, the monomers are assumed to be dielectric; that is, bound charges on a monomer can be separated to form an electric dipole, $\dipole$.
Let $\nvec$ denote the orientation of a monomer.
Given that the dipole depends on the magnitude of the electric field and the orientation of the monomer, $\nvec$, relative to the direction of the electric field, we use a simple anisotropic form \cite{stockmayer1967dielectric,cohen2016electroelasticity}:
\begin{equation}
		\dipole\left(\nvec, \eFieldVec\right) = \vperm \dipoleSusceptibility \eFieldVec = \vperm \left[\sustensexpr\right] \eFieldVec 
\end{equation}
where $\dipoleSusceptibility$ is the dipole susceptibility tensor, $\susPara$ and $\susPerp$ are the dipole susceptibility along $\nvec$ and the susceptibility in plane orthogonal to $\nvec$, respectively, $\eFieldVec$ is the local electric field, and $\vperm$ is the vacuum permittivity.
We refer to monomers with $\susPara > \susPerp$ as field-aligning (FA) and monomers with $\susPerp > \susPara$ as field-disaligning (FD). 

The energy of a single monomer has two contributions: the energy associated with separating charges and the electric potential of a dipole in an electric field~\cite{grasinger2020statistical,grasinger2022statistical}~\footnote{Where $\dipoleSusceptibility^{-1}$ denotes the generalized inverse of $\dipoleSusceptibility$.}:
\begin{equation}  \label{eq:monomer-energy}
	\um = \frac{1}{2\vperm} \dipole \cdot \dipoleSusceptibility^{-1} \dipole - \dipole \cdot \eFieldVec = \frac{\vperm \dsus}{2} \left(\eFieldVec \cdot \nvec\right)^2 - \frac{\vperm \susPerp}{2} \eFieldMag^2,
\end{equation}
where $\dsus = \susPerp - \susPara$.
We assume the monomer energy is such that $\um \sim \kB \T$ and that the energy of dipole-dipole interactions are much less than the thermal energy, i.e. $\vperm \left[\eFieldMag \times \max \left(\left\{\susPara, \susPerp\right\}\right) \right]^2 / \mLen^3 \ll \kB \T$. 
The latter is therefore neglected.

Given the electrostatic energy of a single monomer, the free energy (assuming constant $\T$, $\eFieldVec$, and $\rvec$) is derived using statistical mechanics~\cite{grasinger2020statistical}:
\begin{subequations} \label{eq:DEFreeEnergy}
\begin{equation}
	\DEFreeEnergy\left(\rvec, \eFieldVec\right) = \KGFreeEnergy\left(\chainStretch\right) + \elecFreeEnergy\left(\chainStretch, \eFieldVec\right) + \orientFreeEnergy\left(\chainStretch, \eFieldVec, \rdir\right)
\end{equation}
where $\KGFreeEnergy$ is the purely mechanical contribution to the free energy (see \eqref{eq:KGC}), and
\begin{align}
	\elecFreeEnergy &= \n \kB \T \left\{\frac{\chainStretch \unodim}{\zmultzero} - \uOnodim + \left(1 - \chainStretch^2 \right) \left[-\frac{\unodim}{3} + \ln \left(\frac{2 \sqrt{\unodim}}{\sqrt{\pi} \erfw}\right)\right]\right\}\\
	\orientFreeEnergy &= \n \kB \T \left\{\unodim \left(1 - \frac{3 \chainStretch}{\zmultzero}\right) \left(\eDir \cdot \rdir\right)^2 \right\}
\end{align}
\end{subequations}
are the electric-stretch and electric-orientation contributions, respectfully; and where $\unodim = \eFieldMag^2 \dsus / 2 \kB T$ and $\uOnodim = \eFieldMag^2 \susPerp / 2 \kB T$.
Now let $\polar$ be the angle between $\rdir$ and $\eDir$.
The electrostatic torque on the dielectric chain can be obtained by $-\partialInline{\DEFreeEnergy}{\polar}$.
For chains consisting of FA monomers (i.e. $\unodim < 0$), the torque is driving $\rdir$ toward $\eDir$ when $\polar < \pi / 2$ and $-\eDir$ when $\polar > \pi / 2$--its directions of minimal free energy.
For chains consisting of FD monomers (i.e. $\unodim > 0$), the torque is driving $\rdir$ toward the plane orthogonal to $\eDir$~\cite{grasinger2020statistical,grasinger2022statistical}.
Note that, in the absence of an electric field (i.e. $\unodim = 0$), $\DEFreeEnergy$ reduces to the Kuhn and Gr\"{u}n approximation for a freely jointed chain.
Given $\DEFreeEnergy$, the free energy density for the model developed in this work takes the form
\begin{subequations} \label{eq:new-DE-model}
	\begin{align}
		\label{eq:new-DE-model-min-rot}
		\freeEnergyDensity\left(\F, \eFieldVec\right) &= \min_{\rodVec} \innerFreeEnergyDensity\left(\F, \eFieldVec, \rodVec\right) \quad \text{subject to} \quad \left|\rodVec\right| < 2 \pi,\\
		\label{eq:new-DE-model-inner}
		\innerFreeEnergyDensity\left(\F, \eFieldVec, \rodVec\right) &= \chainDensity \avgOverR{\DEFreeEnergy\left(\F \genRot\left(\rodVec\right) \Rvec, \eFieldVec\right)}.
	\end{align}
\end{subequations}

As an illustration, consider a thin DE film (thickness $h$) with compliant electrodes on its top and bottom surfaces where the film is constrained such that it can only undergo homogeneous simple shear deformation.
We also assume that the electric field is fixed and across the thickness of the film--which, neglecting fringe effects, would be realized by applying a voltage difference across the electrodes, $\Delta \phi$.
A schematic is shown in \fref{fig:shear}.
\begin{figure}
    \centering
	\includegraphics[width=0.8\linewidth]{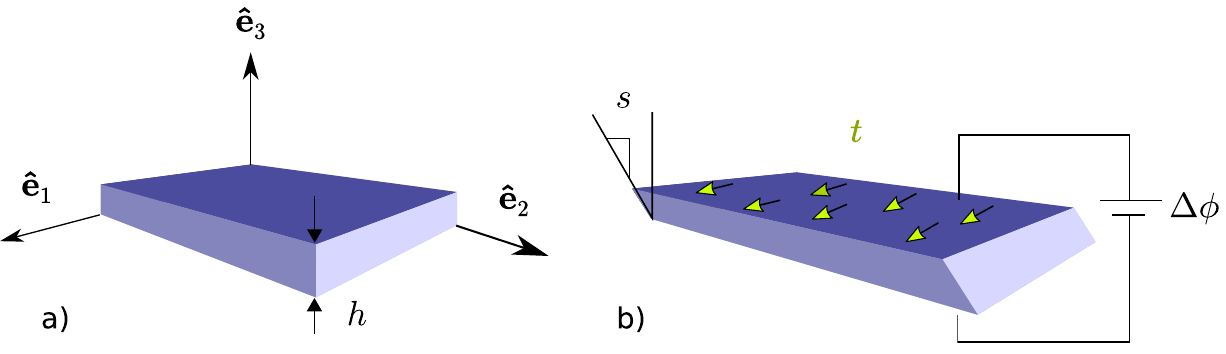}
	\caption{Shear-mode actuation of a dielectric elastomer actuator constrained to only undergo shear deformation.}  
	\label{fig:shear}
\end{figure}

Recall for simple shear, the deformation is of the form $\F = \identity + \shearDef \euclid{1} \otimes \euclid{3}$.
Let a traction, $\tvec = \tmag \euclid{1}$, be applied to the top surface of the film.
The energy density of the electric field is neglected because it does not do work on the DE.
By homogeneity, the free energy is $\vol \times \left(\freeEnergyDensity - \tmag \shearDef\right)$ where $\vol$ is the volume of the body.

Let $\tnodim \coloneqq \tmag / \chainDensity \kB \T$ be the nondimensional applied traction and $\n = 25$ monomers.
Now, following~\citet{grasingerIPtorque}, consider the following \hl{in-silico} experiment:
\begin{inparaenum}[1)]
	\item the traction is applied and is held constant;
	\item the traction causes initial shear deformation, $\shearDef_0$;
	\item a voltage difference is applied across the electrodes such that there is electric field, $\eFieldVec = \left(-\Delta \phi / h\right) \euclid{3}$, in the DE;
	\item for each $\eFieldVec$, a shear strain $\shearDef$ is observed.
\end{inparaenum}
The deformation as function of $\unodim$ predicted by the $3$-, $4$-, \hl{$6$-} and $8$-chain RVEs with the free rotation assumption (depicted by magenta `$\triangle$', green `$\square$', \hl{red `$\pentagon$',} and blue `$*$', respectively), and the full network model with the affine deformation assumption (yellow `$\ocircle$'), is shown in \fref{fig:shear-def}.a.
For the solid lines $\tnodim = 1$, and for the dashed lines $\tnodim = 2$.
There is near exact agreement between the discrete network models, and the full network model also agrees well.
One can see that, for networks with field disaligning (FD) monomers (i.e. $\unodim > 0$), the shear deformation increases with increasing electric field (i.e. $\left|\unodim\right| \uparrow$).
In contrast, for networks with field aligning (FA) monomers (i.e. $\unodim < 0$), the shear deformation decreases with increasing electric field.
Physically, this can be understood as follows.
Intuitively, the shear displacement of the top surface relative to the bottom surface serves to ``flatten'' chains into the plane orthogonal to the direction of the electric field, $\euclid{3}$.
Thus, the shear displacement is electrostatic-energetically favorable (i.e. decreases $\elecFreeEnergy$) for networks with FD monomers and unfavorable for networks with FA monomers~\cite{grasingerIPtorque}.
\begin{figure*}
    \centering
	\includegraphics[width=\linewidth]{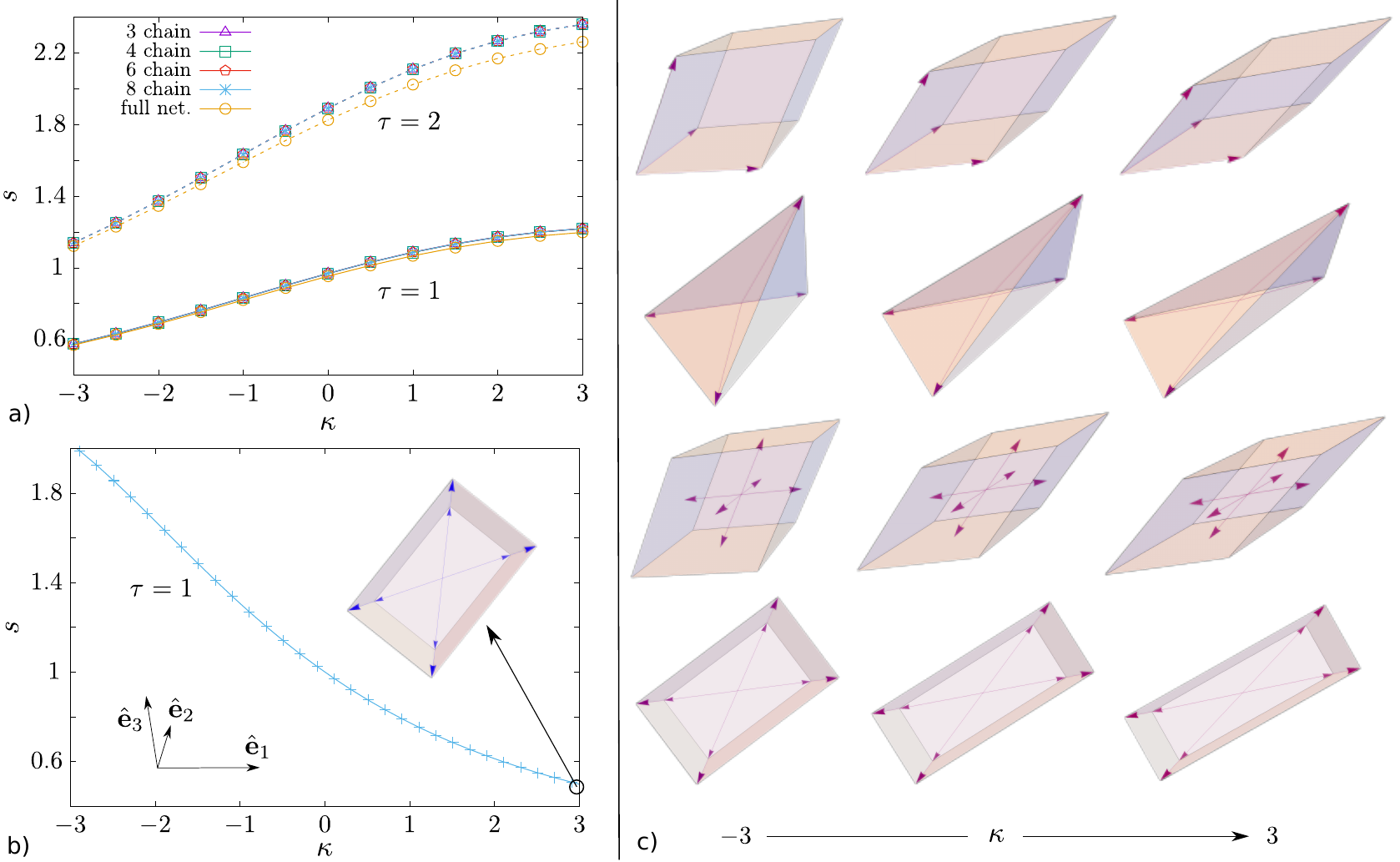}
	\caption{\capTitle{Shear-mode actuation of dielectric elastomers.}
		a) Shear deformation, $\shearDef$, as a function of $\unodim$ predicted by the $3$-, $4$-, \hl{$6$-}, and $8$-chain RVEs with the free rotation assumption and the full network model with the affine deformation assumption.
		For networks with FD monomers (i.e. $\unodim > 0$), the shear deformation increases with increasing electric field; and for networks with FA monomers (i.e. $\unodim < 0$), the shear deformation decreases with increasing electric field.
		This is because the shear displacement serves to ``flatten'' chains into the plane orthogonal to the electric field.
		b) Shear deformation as predicted by the classical $8$-chain model \hl{with the principal frame assumption} is diametrically opposed to the intuitive results of a).
		c) Deformed RVEs for the $3$- (top row), $4$- (top-middle row), \hl{$6$- (bottom-middle row),} and $8$-chain (bottom row) RVEs for $\unodim = -3, 0,$ and $3$ (left column, middle column, and right column, respectively).
	}  
	\label{fig:shear-def}
\end{figure*}

The disagreement of the full network model (with respect to the discrete networks) is small but significant.
The full network model underestimates the deformation, relative to the discrete networks, because the continuous uniform density of chains cannot reduce its free energy by a local rotation.
The free energy of the full network (for the same $\chainDensity$) is greater so that, in balancing with the work of the applied traction, less deformation occurs and the network is apparently stiffer.

Next consider the deformation predicted by the classical Arruda-Boyce $8$-chain model; that is, the $8$-chain RVE using the principal frame assumption--as shown in \fref{fig:shear-def}.b for $\tnodim = 1$.
This predicted deformation is diametrically opposed to the predictions of the $3$-, $4$-, \hl{$6$-}, and $8$-chain RVEs with the free rotation assumption and is counter-intuitive.
Not only is the free rotation assumption more consistent with thermodynamic principles than the principal frame assumption, but this also suggests that the free rotation assumption more properly represents chain orientations within the network for a given $\F$.
\Fref{fig:shear-def}.c shows the deformed RVEs for the $3$- (top row), $4$- (top-middle row), \hl{$6$- (bottom-middle row),} and $8$-chain (bottom row) RVEs for $\unodim = -3, 0,$ and $3$ (left column, middle column, and right column, respectively).

A quantitative measure of the RVE rotations as a function of $\unodim$ is given in \fref{fig:shear-rotations}.
Let $\rodVec_0$ be the Rodrigues' vector for the RVE rotation at zero applied electric field, $\unodim = 0$, and (nonzero) applied traction $\tnodim$.
Then \fref{fig:shear-rotations} shows $\left|\rodVec - \rodVec_0\right|$ as a function of $\unodim$ for the $3$-, $4$-, \hl{$6$-}, and $8$-chain RVEs (depicted by magenta `$\triangle$', green `$\square$', \hl{red `$\pentagon$',} and blue `$*$', respectively).
For the dash-dot lines $\tnodim = 0.5$, and for the solid lines $\tnodim = 1$.
It is notable that whether the $4$-chain or $8$-chain RVE experiences more rotation as a function of $\unodim$ depends on $\tnodim$; and the rotation of the $3$- and \hl{$6$}-chain RVEs (relative to $\rodVec_0$) are \begin{inparaenum}[1)] \item \hl{equal, and} \item not monotonic in $\left|\unodim\right|$. \end{inparaenum}
\hl{The rotations for the $3$- and $6$-chain RVEs agree exactly in this case because the chain free energy has the symmetry $\DEFreeEnergy\left(\rvec\right) = \DEFreeEnergy\left(-\rvec\right)$.}
\begin{figure}
    \centering
	\includegraphics[width=\singleGraphWidth]{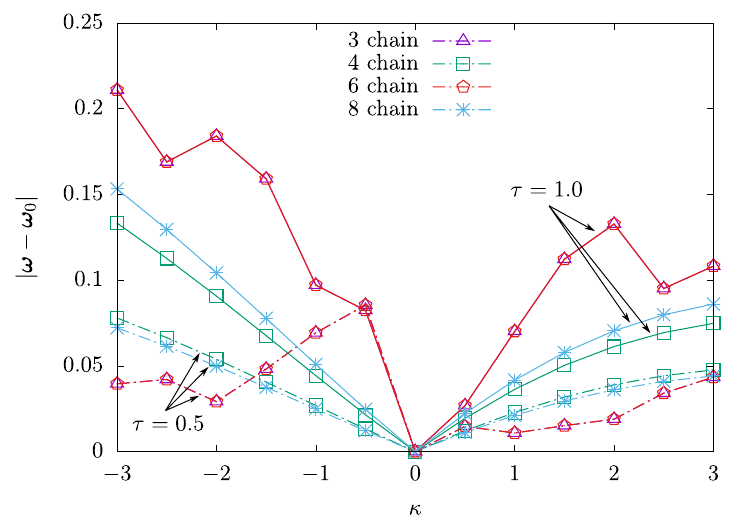}
	\caption{\capTitle{Local RVE rotation of shear-mode actuated dielectric elastomers.}
		$\left|\rodVec - \rodVec_0\right|$ as a function of $\unodim$ for the $3$-, $4$-, \hl{$6$-,} and $8$-chain RVEs where $\rodVec_0$ is the Rodrigues' vector for the RVE rotation at zero applied electric field.
		For the dash-dot lines $\tnodim = 0.5$, and for the solid lines $\tnodim = 1$.
		Whether the $4$-chain or $8$-chain RVE experiences more rotation as a function of $\unodim$ depends on $\tnodim$; and the rotation of the $3$- and \hl{$6$}-chain RVEs (relative to $\rodVec_0$) are 1) \hl{equal, and} 2) not monotonic in $\left|\unodim\right|$.
	}  
	\label{fig:shear-rotations}
\end{figure}

\section{Phases and multistability in networks with nonconvex chain free energies: biopolymers and semi-crystalline polymers} \label{sec:nonconvex}

It was shown in \fref{sec:unification} that, for chain free energies which are convex functions of $\chainStretch^2$ (e.g. GC, WLC, KGC), \begin{inparaenum}[1)] \item \label{item:unification} the free rotation assumption unifies discrete polymer network RVEs and \item \label{item:equipartition} rotations after which $\F$ stretches each of the chains equally (i.e. equipartition of squared stretch is satisfied) minimize the average free energy. \end{inparaenum}
A natural question then is: ``are there (physically meaningful) nonconvex $\chainFreeEnergy$ for which either of the above results do not hold?''

\newcommand{\bioC}{\epsilon}
\paragraph*{Biopolymers} Many biopolymers such as fibrin and collagen fibers are known to buckle in compression~\cite{sun2020fibrous,broedersz2014modeling}; and, as a result, have a nonconvex free energy.
Recently, fiber buckling has been proposed as a mechanism for certain densification phase transitions that occur in biopolymer networks resulting in cell remodelling, tethering between cells, and the compression response of blood clots~\cite{grekas2021cells,sun2020fibrous}.
Here we follow recent work~\cite{grekas2021cells,sun2020fibrous} and assume that the chain free energy is a quartic function of $\chainStretch$.
Let $\relChainStretch = \rmag / \Rmag$ denote the relative chain stretch.
We assume the form
\begin{equation}
	\bioFreeEnergy\left(\relChainStretch\right) = \frac{\alpha}{\bioC^4} \kB \T \left(\left(\relChainStretch - 1 + \bioC\right)^2 - \bioC^2\right)^2.
\end{equation}
where $0 < \bioC < 1$.
This $\bioFreeEnergy$ has local minima at $\relChainStretch = 1$ and the compressed length $\relChainStretch = 1 - \bioC$ such that it captures fiber buckling.
Here we choose $\bioC = 0.05$.

Next, consider the deformation gradient $\F = \diag\left(\pStretchSymbol, 1, 1\right)$ where $0 < \pStretchSymbol < 1$ characterizes the compression of a biopolymer gel~\cite{sun2020fibrous} with vanishing bulk modulus.
\Fref{fig:Bio} shows the free energy density as a function of $\pStretchSymbol$ for the $3$-, $4$-, $6$-, and $8$-chain RVEs with the free rotation assumption, and for the classical $8$-chain model with the principal frame assumption (PFA) (i.e. all chains have the same stretch).
Local minima correspond with different phases.
As can be seen in \fref{fig:Bio}, the number of phases depends on the macro-to-micro kinematic assumption.
The free rotation networks all have $4$ phases while the $8$-chain with PFA has $2$.
Because the RVEs with the free rotation assumption do not agree with the standard $8$-chain model, we can conclude that there are indeed interesting nonconvex chain free energies for which equipartition of squared stretch is not satisfied.
Specifically, not one of the RVEs considered satisfies the equipartition property within approximately the domain of $\pStretchSymbol \in \left(\approx 0.75, \approx 0.95\right)$.
\hl{Instead, the network obtains a lower energy by relaxing to a state in which some biopolymers are buckled while others are less compressed.}
Recent results suggest that realistic biopolymer networks may indeed not have equal stretch for chains in various directions (i.e. not agree with equipartition)~\cite{song2022hyperelastic}.
In contrast, all of the network models agree when $\pStretchSymbol < 0.7$ or $\pStretchSymbol > 0.96$ which suggests the equipartition solution is the optimal one for these ranges of $\pStretchSymbol$.
While all of the free rotation polymer networks have the same number of phases, the location and depth of the phases depends on the network topology.
The free energy response of $4$- and $8$-chain RVEs agree and the $3$- and $6$-chain models agree; however, the $4$- and $8$-chain differ from the $3$- and $6$-chain.
Thus, there are instances for which, not only is the macro-to-micro deformation relationship significant, but the network topology is significant as well.
\hl{That the $3$- and $6$-chain RVEs agree is a straight forward consequence of the fact that \begin{inparaenum}[1)] \item $\bioFreeEnergy$ depends only on chain stretch and \item $\chainStretch_i = \sqrt{\left(-\Rvec_i\right)\cdot\changeCoord{\cGreen}\left(-\Rvec_i\right)} / \cLen = \sqrt{\Rvec_i\cdot\changeCoord{\cGreen}\Rvec_i} / \cLen$. \end{inparaenum}
Similarly, the $4$- and $8$-chain RVEs agree because the $4$-chain tetrahedron can be rotated (relative to the $8$-chain RVE) such that its chains lie along the diagonals--and, hence, chains--of the $8$-chain RVE (see the proof of \fref{prop:unification} or~\citet{beatty2003average}).
We can conclude then that \emph{when the chain free energy is only a function of stretch, the RVEs can be separated into two equivalence classes: one consisting of the $3$- and $6$-chain RVEs, and the other consisting of the $4$- and $8$-chain RVEs}.
The two equivalence classes merge when the chain free energy is a convex function of $\chainStretch^2$.
}
\begin{figure}
    \centering
	\includegraphics[width=\singleGraphWidth]{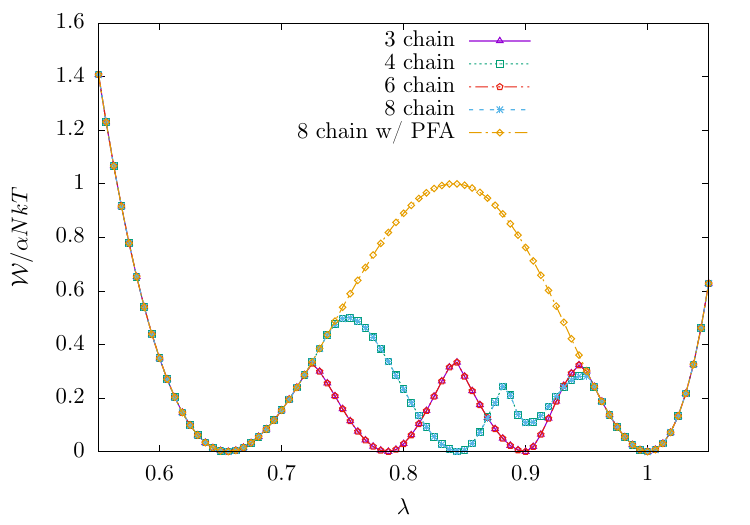}
	\caption{
		\capTitle{Densification phases of a biopolymer gel.}
		Free energy response of a biopolymer gel (with fiber buckling) to a $1$ dimensional, vanishing bulk modulus, deformation as predicted by $3$-, $4$-, $6$-, and $8$-chain RVEs with the free rotation assumption and by the classical $8$-chain model with the principal frame assumption \hl{(PFA)}.
		The number of predicted phases is shown to depend on the macro-to-micro kinematic assumption; and the depth and location of the free energy wells depends on the network topology.
	}  
	\label{fig:Bio}
\end{figure}

\hl{
It should be noted that the above model is minimal and likely missing key some elements needed to characterize more general deformation modes in biopolymer networks.
For instance, it is well known that certain biopolymer networks have deformation regimes dominated by fiber bending rather than affine stretching or compression~\cite{broedersz2014modeling,abhilash2014remodeling,pritchard2014mechanics}.
Without considering nonaffine affects such as the bending stiffnesses of the fibers or transmission of bending moments at cross-links~\cite{pritchard2014mechanics}, nor the nonvanishing bulk modulus of the network, the resulting model can produce nonphysical results such as zero resistance to infinitesimal simple shear~\cite{song2022hyperelastic}; and, in fact, the $3$- and $6$-chain RVEs become mechanisms to pure shear~\cite{friesecke2002validity}.
Complicating matters further, connectivity in biopolymer networks is sometimes subcritical and can be transient~\cite{broedersz2014modeling,pritchard2014mechanics}.
We also emphasize that fiber buckling is not the only mechanism relevant for the emergence of macroscale mechanical instabilities in biopolymer networks.
Instead the above model serves as a minimal example for instances where the equipartition of stretch is not the optimal deformation of the network and associated consequences for certain densification phases in biopolymer networks.
}

\paragraph*{Semi-crystalline polymers} As a final example, we probe the generality of the above result by considering a chain free energy which, in addition to fiber buckling, also has a stretch-induced crystallization phase.
\begin{equation}
	\chainFreeEnergy_{\text{semi-crystal}}\left(\relChainStretch\right) = \frac{\alpha}{\bioC^6} \kB \T \left(\left(\relChainStretch - 1\right)^2 - \bioC^2\right)^2 \left(\relChainStretch - 1\right)^2.
\end{equation}
where, again, $\bioC = 0.05$.
For this case, the chain has three minima: $\relChainStretch = 1, 1 \pm \bioC$.
In contrast to the biopolymer gel with vanishing bulk modulus, we consider an incompressible deformation of the form $\F = \diag\left(\pStretchSymbol, 1 / \sqrt{\pStretchSymbol}, 1 / \sqrt{\pStretchSymbol}\right)$.
Again, in \fref{fig:Bio3}, we see similar phenomena such as \begin{inparaenum}[1)] \item the free rotation networks predict more phases than the \hl{$8$-chain RVE with the PFA}, \item the free rotation networks satisfy equipartition (i.e. are equivalent to the \hl{$8$-chain with PFA}) for certain intervals of deformation, $\pStretchSymbol \in \left(0, \approx 0.71\right) \cup \left(\approx 0.89, \approx 1.15\right) \cup \left(\approx 1.33, \infty\right)$, and \item the $4$- and $8$-chain RVEs agree well, and $3$- and $6$-chain RVEs agree well \hl{(as expected)}, but the $4/8$ behaves differently than the $3/6$ outside of the equipartition intervals. \end{inparaenum}
\begin{figure}
    \centering
	\includegraphics[width=\singleGraphWidth]{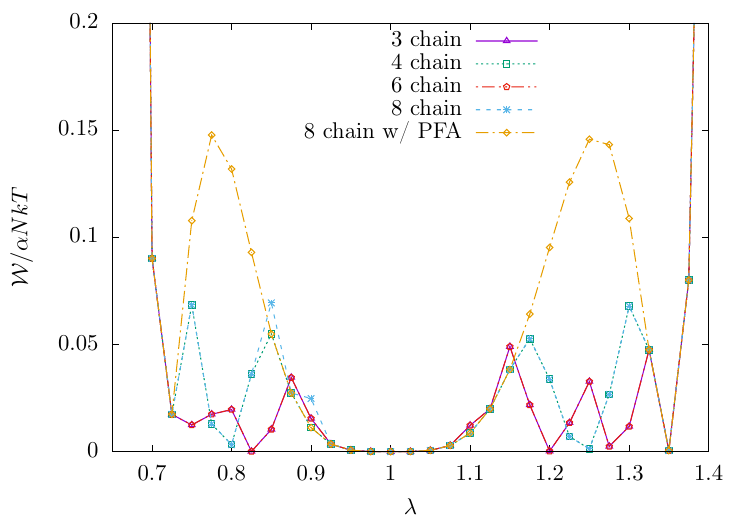}
	\caption{
		\capTitle{Phases of a semi-crystalline polymer network.}
		Free energy response of a semi-crystalline polymer network (with fiber buckling and strain-induced crystallization) to an incompressible, uniaxial deformation as predicted by $3$-, $4$-, $6$-, and $8$-chain RVEs with the free rotation assumption and by the classical $8$-chain model with the principal frame assumption \hl{(PFA)}.
		The number of predicted phases, and the depth and location of the free energy wells, depend on both the macro-to-micro kinematic assumption and on the network topology.
	}  
	\label{fig:Bio3}
\end{figure}

\section{Conclusion} \label{sec:conclusion}

Recent work has highlighted the importance of polymer chain orientational energies and internal torques within the network as they have the potential to be used for shape morphing~\cite{moreno2022effects,zhao2019mechanics}, asymmetric actuation modes~\cite{grasingerIPtorque}, and architected polymer networks with enhanced stimuli-responsive properties~\cite{grasinger2020architected}.
However, two of the most important and successful polymer network models, the \hl{$8$-chain with the principal frame assumption (i.e. Arruda-Boyce)} and the \hl{affine} full network models, make contradictory predictions for stimuli-responsive polymer networks.
This motivates the need for new approaches to polymer network modelling.

In this work, we developed a new macro-to-micro kinematic assumption for relating macroscopic deformation, $\F$, to the deformation of chains within the polymer network.
The newly developed kinematic assumption, the free rotation assumption, posits that the network is free to locally rotate (at material point $\xvec$) in order to most efficiently accommodate some $\F$.
Importantly, the newly developed model assumptions do not require any additional fitting parameters.
We showed that the free rotation assumption unifies discrete polymer network models (specifically, the $3$-, $4$-, $6$-, and $8$-chain RVEs) and recovers the successful \hl{$8$-chain model with the principal frame assumption} when the network consists of chains which do not depend on $\rdir$ and are convex functions of $\chainStretch^2$.

Next we modelled the constitutive response of dielectric elastomers using the newly developed free rotation polymer network models and compared to classical polymer network models.
It was found that the free rotation assumption more closely agrees with intuition for an asymmetric, shear electromechanical actuation mode.
Lastly, we explored the implications of the free rotation assumption for phases and multistability in networks with nonconvex chain free energies; this included models for fiber buckling in biopolymer networks and strain induced crystallization of semi-crystalline networks.
It was found, in each case, that the number of phases depended on the macro-to-micro kinematic assumption (i.e. free rotation vs. principal frame).
Also, for certain ranges of deformation, the network topology was significant as there was a difference in the free energy response between the $3$ and $6$-chain RVEs, and the $4$- and $8$-chain RVEs.
For the stimuli-responsive, biopolymer, and semi-crystalline polymer networks considered in this work, the difference between the newly developed discrete polymer network models and the \hl{classical $8$-chain model with the principal frame assumption} were more than quantitative, they were also \emph{qualitative}.
This suggests it may be possible to validate, via careful experimentation, which macro-to-micro kinematic assumptions are more or less physically accurate for a given polymer network; in other words, the inherent physical phenomena may be isolated from the fitting of model parameters.
While in some cases it may be difficult to directly discern or differentiate stable phases of a multistable polymer network, it may be possible to leverage new statistical techniques inspired by work on early warning signals to aid in the measurement of phases~\cite{matsui2022visualizing}.

Interesting topics for potential future work include the implications of the free rotation assumption for liquid crystal elastomers, the fracture of polymer networks, and strain gradient effects like flexoelectricity or strain gradient elasticity.
Although excluded volume effects and chain entanglement were not explicitly considered in this work, we do remark that the tube model (e.g.~\cite{miehe2004micro}), and other, more physics-based approaches for excluded volume (e.g.~\cite{khandagale2022statistical}), may be naturally incorporated into the developed network modelling approach.

\section*{Software availability}
The code(s) used for analysis and generation of data for this work is available at \url{https://github.com/grasingerm/MPS-D-22-01011}.

\section*{Acknowledgments}
 The author acknowledges insightful discussions with Kaushik Dayal and Benjamin Grossman, and the support of the Air Force Research Laboratory.

\appendix

\begin{hlbreakable}
\section{Inhomogeneous deformations and compatibility} \label{app:compatibility}

In this appendix, the issue of compatibility in a polymer network undergoing inhomogeneous local rotations is explored.
Consider two $8$-chain RVEs side-by-side along the $\euclid{1}$ direction, as shown in \fref{fig:compatibility} (side view).
The positions of the cross-links at the center or each RVE are $\xvec$ (left RVE) and $\xvec + 2 \Rmag / \sqrt{3} \euclid{1}$ (right RVE).
Four of the chains in the left RVE share a cross-link with a corresponding chain in the right RVE.
Let $\Rvec_1$ and $\Rvec_2$ be a pair of reference chain end-to-end vectors from the left and right RVEs, respectively, which share a common cross-link.
Given a general deformation field $\F\left(\xvec\right)$, let $\genRotStar = \arg \min_{\genRot \in \SOThree} \innerFreeEnergyDensity\left(\F, \genRot\right)$ be the corresponding local rotation field.
A standard assumption is that $\F$ varies ``gradually'' relative to the size of the RVE~\cite{grasingerIPflexoelectricity}.
Otherwise the continuum constitutive response to $\takeGrad{\F}$ produced by the network model can be nonphysical, and, in general, a continuum-scale description is likely infeasible~\cite{grasingerIPflexoelectricity}.
A natural question then is under what additional conditions a gradual change in $\F$ also implies a gradual change in $\genRotStar$; that is, ``when does $\lVert \takeGrad{\F} \rVert \Rmag \ll 1 \implies \lVert \takeGrad{\genRotStar} \rVert \Rmag \ll 1$?''
For the special cases of the $4$-chain and $8$-chain RVEs with chain free energies which are convex functions of $\chainStretch^2$ and do not depend on chain orientation, recall that $\genRotStar = \genRot' \pFrame$ (where $\genRot'$ is a constant) and $\genRotStar = \pFrame$, respectively.
The principal directions of stretch are continuous functions of $\F$~\cite{horn1985matrix} so that $\lVert \takeGrad{\genRotStar} \rVert \Rmag$ can be made small provided that $\lVert \takeGrad{\F} \rVert \Rmag$ is small enough.
How generally this holds remains an open and difficult question, however, this illustrates that there are indeed important physical cases for which $\lVert \takeGrad{\F} \rVert \Rmag \ll 1 \implies \lVert \takeGrad{\genRotStar} \rVert \Rmag \ll 1$.
\begin{figure}
	\centering
	\includegraphics[width=0.9\linewidth]{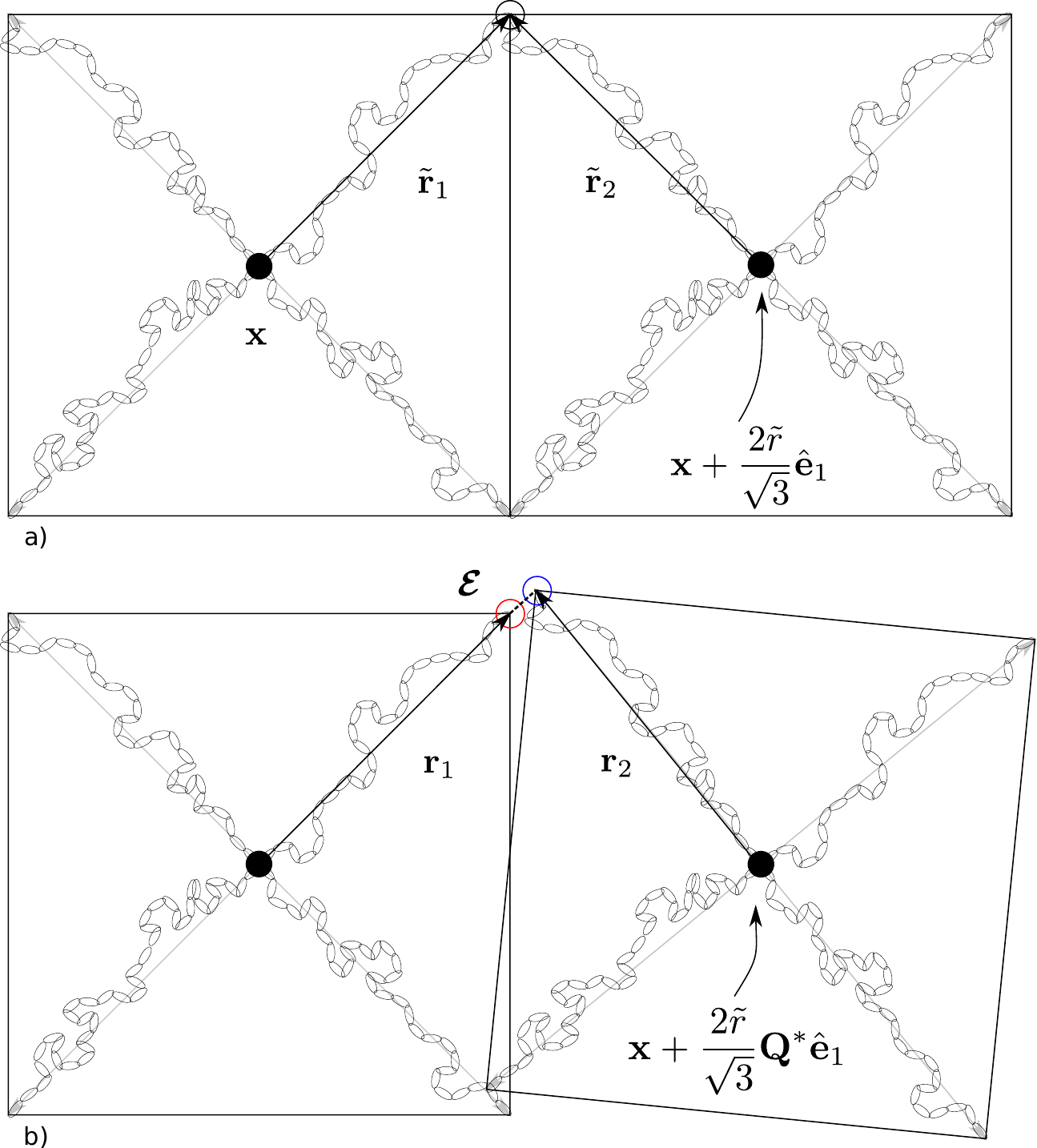}
	\caption{
		\hl{
			\capTitle{Network incompatibility due to inhomogenous local rotations.}
			a) Side view of two $8$-chain RVEs side-by-side along the $\euclid{1}$ direction (reference configuration).
			b) A gap, $\gapvec$, forms between $\rvec_1$ and $\rvec_2$ after inhomogeneous local rotation where the cross-link at $\xvec$ (left RVE) rotates by $\genRotStar$ and its neighboring cross-link (right RVE) rotates by $\genRotStar\left(\xvec\right) + \left(2 \Rmag / \sqrt{3}\right) \left(\takeGrad{\genRotStar}\right) \euclid{1}$; for certain cases, $\lVert \takeGrad{\F} \rVert \Rmag$ ``small'' implies that the magnitude of the gap, $\left|\gapvec\right| / \Rmag$, is also ``small''.
		}
	}  
	\label{fig:compatibility}
\end{figure}

Motivated by the above discussion, assume that $\lVert \takeGrad{\genRotStar} \rVert \Rmag \sim \lVert \takeGrad{\F} \rVert \Rmag$ and $\smallpar = \lVert \takeGrad{\genRotStar} \rVert \Rmag \ll 1$ (and $\lVert \Gradd^n \: \genRotStar \rVert \Rmag^n = \orderOf{\smallpar^2}$).
Assume further that $\genRotStar$ is a differentiable function of $\xvec$.
Then
	\begin{equation} \label{eq:smallgrad}
		\genRotStar\left(\xvec + \frac{2 \Rmag}{\sqrt{3}} \euclid{1}\right) = \genRotStar\left(\xvec\right) + \frac{2 \Rmag}{\sqrt{3}} \left(\takeGrad{\genRotStar}\right) \euclid{1} + \orderOf{\smallpar^2}.
	\end{equation}
Upon local rotation of the RVEs (and neglecting higher order terms, $\orderOf{\smallpar^2}$):
\begin{subequations}
\begin{align}
	\rvec_1 &= \genRotStar \Rvec_1 \\
	\rvec_2 &= \left(\genRotStar + \frac{2 \Rmag}{\sqrt{3}} \left(\takeGrad{\genRotStar}\right) \euclid{1} \right) \Rvec_2.
\end{align}
\end{subequations}
To satisfy compatibility (i.e. to ensure $\rvec_1$ and $\rvec_2$ still meet at a cross-link), we require that
\begin{equation} \label{eq:compat-1}
	\xvec + \rvec_1	= \xvec + \frac{2 \Rmag}{\sqrt{3}} \genRotStar \euclid{1} + \rvec_2.
\end{equation}
The condition \eqref{eq:compat-1} is not satisfied for general $\takeGrad{\genRotStar}$.
Its residual, the gap vector between $\rvec_1$ and $\rvec_2$, is
\begin{equation}
	\gapvec = \left(\xvec + \rvec_1\right) - \left(\xvec + \frac{2 \Rmag}{\sqrt{3}} \genRotStar \euclid{1} + \rvec_2\right) = -\frac{2 \Rmag}{\sqrt{3}} \left(\left(\takeGrad{\genRotStar}\right) \euclid{1}\right) \Rvec_2.
\end{equation}
The gap vector, $\gapvec$, is then mapped under $\F$ into the deformed configuration.
The magnitude of this gap relative to $\Rmag$ provides us with a measure of incompatibility:
\begin{equation}
	\frac{\sqrt{\gapvec \cdot \cGreen \gapvec}}{\Rmag} \leq \frac{2}{\sqrt{3}} \lVert \cGreen \rVert \smallpar.
\end{equation}
Depending on the mechanical properties of the polymer chains within the network (e.g. $\pLen / \cLen$ where, recall, $\pLen$ is the persistence length), it is reasonable to assume that the small gap, $\F \gapvec$, can be absorbed into the local chain confirmations with a negligible change in free energy density.
This motivates the validity of the free rotation assumption for some physical systems of interest.

The above analysis, however, is not general.
A more rigorous and comprehensive analysis of how $\netFreeEnergyDensity$ depends on inhomogeneous local rotation fields, $\genRotStar\left(\xvec\right)$, presents an interesting topic for future work.
We note, parenthetically, that the free rotation assumption may have some similarities to continuum field theory models for planar kirigami~\cite{zheng2022continuum}\footnote{\hl{Kirigami is similar to origami but with holes. Kirigami is sometimes called ``the art of paper cutting''--analogous to origami also being known as ``the art of paper folding''}}.
Roughly speaking, the repeat unit for planar kirigami consists of $4$ stiff quadrilateral plates and $4$ parallelogram slits that are connected to their nearest neighbors by hinges.
Theoretically, when the plates are rigid and the hinges are ideal, the rotation of a single repeat unit determines the deformation of the entire bulk metamaterial; that is, the deformation is a \emph{global mechanism}.
However, in practice, deformations tend to be more localized and consist of soft modes.
In fact,~\citet{zheng2022continuum} showed that, in the limit of finely patterned kirigami each repeat unit can be approximated by a deformation which is ``locally mechanistic''; that is, the effective deformation gradient can be decomposed into a field of local rotations and distortions.
We speculate that making a possible connection to locally rotating polymer networks more precise, or an analysis of locally rotating polymer networks along the lines of~\cite{zheng2022continuum} or~\cite{friesecke2002validity} may prove fruitful.
\end{hlbreakable}

\bibliographystyle{unsrtnat}
\bibliography{master}

\begin{thebibliography}{67}
\providecommand{\natexlab}[1]{#1}
\providecommand{\url}[1]{\texttt{#1}}
\expandafter\ifx\csname urlstyle\endcsname\relax
  \providecommand{\doi}[1]{doi: #1}\else
  \providecommand{\doi}{doi: \begingroup \urlstyle{rm}\Url}\fi

\bibitem[Mooney(1940)]{mooney1940theory}
Melvin Mooney.
\newblock A theory of large elastic deformation.
\newblock \emph{Journal of applied physics}, 11\penalty0 (9):\penalty0
  582--592, 1940.

\bibitem[Rivlin(1948)]{rivlin1948large}
Ronald~S Rivlin.
\newblock Large elastic deformations of isotropic materials iv. further
  developments of the general theory.
\newblock \emph{Philosophical transactions of the royal society of London.
  Series A, Mathematical and physical sciences}, 241\penalty0 (835):\penalty0
  379--397, 1948.

\bibitem[Destrade et~al.(2022)Destrade, Dorfmann, and
  Saccomandi]{destrade2022ogden}
Michel Destrade, Luis Dorfmann, and Giuseppe Saccomandi.
\newblock The ogden model of rubber mechanics: 50 years of impact on nonlinear
  elasticity.
\newblock \emph{Philosophical Transactions of the Royal Society A},
  380\penalty0 (2234):\penalty0 20210332, 2022.

\bibitem[Ogden(1972)]{ogden1972large}
Raymond~William Ogden.
\newblock Large deformation isotropic elasticity–on the correlation of theory
  and experiment for incompressible rubberlike solids.
\newblock \emph{Proceedings of the Royal Society of London. A. Mathematical and
  Physical Sciences}, 326\penalty0 (1567):\penalty0 565--584, 1972.

\bibitem[Treloar(1975)]{treloar1975physics}
L~R~G Treloar.
\newblock \emph{The physics of rubber elasticity}.
\newblock Oxford University Press, 1975.

\bibitem[James and Guth(1943)]{james1943theory}
Hubert~M James and Eugene Guth.
\newblock Theory of the elastic properties of rubber.
\newblock \emph{The Journal of Chemical Physics}, 11\penalty0 (10):\penalty0
  455--481, 1943.

\bibitem[Treloar(1943)]{treloar1943elasticity}
LRG Treloar.
\newblock The elasticity of a network of long-chain molecules. i.
\newblock \emph{Transactions of the Faraday Society}, 39:\penalty0 36--41,
  1943.

\bibitem[Arruda and Boyce(1993)]{arruda1993threee}
Ellen~M Arruda and Mary~C Boyce.
\newblock A three-dimensional constitutive model for the large stretch behavior
  of rubber elastic materials.
\newblock \emph{Journal of the Mechanics and Physics of Solids}, 41\penalty0
  (2):\penalty0 389--412, 1993.

\bibitem[Wu and Van Der~Giessen(1993)]{wu1993improved}
PD~Wu and Erik Van Der~Giessen.
\newblock On improved network models for rubber elasticity and their
  applications to orientation hardening in glassy polymers.
\newblock \emph{Journal of the Mechanics and Physics of Solids}, 41\penalty0
  (3):\penalty0 427--456, 1993.

\bibitem[Miehe et~al.(2004)Miehe, G{\"o}ktepe, and Lulei]{miehe2004micro}
C~Miehe, S~G{\"o}ktepe, and F~Lulei.
\newblock A micro-macro approach to rubber-like matrials--part i: the
  non-affine micro-sphere model of rubber elasticity.
\newblock \emph{Journal of the Mechanics and Physics of Solids}, 52\penalty0
  (11):\penalty0 2617--2660, 2004.

\bibitem[Beatty(2003)]{beatty2003average}
Millard~F Beatty.
\newblock An average-stretch full-network model for rubber elasticity.
\newblock \emph{Journal of Elasticity}, 70\penalty0 (1-3):\penalty0 65--86,
  2003.

\bibitem[Erman et~al.(1980)Erman, Flory, and Hummel]{erman1980moments}
B~Erman, PJ~Flory, and JP~Hummel.
\newblock Moments of the end-to-end vectors for p-phenylene polyamides and
  polyesters.
\newblock \emph{Macromolecules}, 13\penalty0 (3):\penalty0 484--491, 1980.

\bibitem[Bechir et~al.(2010)Bechir, Chevalier, and Idjeri]{bechir2010three}
H~Bechir, Luc Chevalier, and M~Idjeri.
\newblock A three-dimensional network model for rubber elasticity: The effect
  of local entanglements constraints.
\newblock \emph{International journal of engineering science}, 48\penalty0
  (3):\penalty0 265--274, 2010.

\bibitem[Miroshnychenko and Green(2009)]{miroshnychenko2009heuristic}
Dmitri Miroshnychenko and WA~Green.
\newblock Heuristic search for a predictive strain-energy function in nonlinear
  elasticity.
\newblock \emph{International Journal of Solids and Structures}, 46\penalty0
  (2):\penalty0 271--286, 2009.

\bibitem[Xiang et~al.(2018)Xiang, Zhong, Wang, Mao, Yu, and
  Qu]{xiang2018general}
Yuhai Xiang, Danming Zhong, Peng Wang, Guoyong Mao, Honghui Yu, and Shaoxing
  Qu.
\newblock A general constitutive model of soft elastomers.
\newblock \emph{Journal of the Mechanics and Physics of Solids}, 117:\penalty0
  110--122, 2018.

\bibitem[Kroon(2011)]{kroon20118}
Martin Kroon.
\newblock An 8-chain model for rubber-like materials accounting for non-affine
  chain deformations and topological constraints.
\newblock \emph{Journal of elasticity}, 102\penalty0 (2):\penalty0 99--116,
  2011.

\bibitem[Zhan et~al.(2023)Zhan, Wang, Qu, Steinmann, and Xiao]{zhan2023new}
Lin Zhan, Siyu Wang, Shaoxing Qu, Paul Steinmann, and Rui Xiao.
\newblock A new micro--macro transition for hyperelastic materials.
\newblock \emph{Journal of the Mechanics and Physics of Solids}, 171:\penalty0
  105156, 2023.

\bibitem[Hossain et~al.(2015)Hossain, Amin, and Kabir]{hossain2015eight}
Mokarram Hossain, AFMS Amin, and Muhammad~Nomani Kabir.
\newblock Eight-chain and full-network models and their modified versions for
  rubber hyperelasticity: A comparative study.
\newblock \emph{Journal of the Mechanical Behavior of Materials}, 24\penalty0
  (1-2):\penalty0 11--24, 2015.

\bibitem[Dal et~al.(2021)Dal, A{\c{c}}{\i}kg{\"o}z, and
  Badienia]{dal2021performance}
H{\"u}sn{\"u} Dal, Kemal A{\c{c}}{\i}kg{\"o}z, and Yashar Badienia.
\newblock On the performance of isotropic hyperelastic constitutive models for
  rubber-like materials: A state of the art review.
\newblock \emph{Applied Mechanics Reviews}, 73\penalty0 (2), 2021.

\bibitem[Zhao et~al.(2022)Zhao, Cao, Li, Xia, Xue, and Yuan]{zhao2022network}
Tiankai Zhao, Jinrui Cao, Xin Li, Mingyong Xia, Bing Xue, and Hongyan Yuan.
\newblock A network-based visco-hyperelastic constitutive model for optically
  clear adhesives.
\newblock \emph{Extreme Mechanics Letters}, 51:\penalty0 101594, 2022.

\bibitem[Grekas et~al.(2021)Grekas, Proestaki, Rosakis, Notbohm, Makridakis,
  and Ravichandran]{grekas2021cells}
Georgios Grekas, Maria Proestaki, Phoebus Rosakis, Jacob Notbohm, Charalambos
  Makridakis, and Guruswami Ravichandran.
\newblock Cells exploit a phase transition to mechanically remodel the fibrous
  extracellular matrix.
\newblock \emph{Journal of the Royal Society Interface}, 18\penalty0
  (175):\penalty0 20200823, 2021.

\bibitem[Song et~al.(2022)Song, Oberai, and Janmey]{song2022hyperelastic}
Dawei Song, Assad~A Oberai, and Paul~A Janmey.
\newblock Hyperelastic continuum models for isotropic athermal fibrous
  networks.
\newblock \emph{Interface Focus}, 12\penalty0 (6):\penalty0 20220043, 2022.

\bibitem[Kuhl et~al.(2005)Kuhl, Garikipati, Arruda, and
  Grosh]{kuhl2005remodeling}
Ellen Kuhl, Krishna Garikipati, Ellen~M Arruda, and Karl Grosh.
\newblock Remodeling of biological tissue: mechanically induced reorientation
  of a transversely isotropic chain network.
\newblock \emph{Journal of the Mechanics and Physics of Solids}, 53\penalty0
  (7):\penalty0 1552--1573, 2005.

\bibitem[Alastru{\'e} et~al.(2009)Alastru{\'e}, Martinez, Doblar{\'e}, and
  Menzel]{alastrue2009anisotropic}
V~Alastru{\'e}, MA~Martinez, M~Doblar{\'e}, and Andreas Menzel.
\newblock Anisotropic micro-sphere-based finite elasticity applied to blood
  vessel modelling.
\newblock \emph{Journal of the Mechanics and Physics of Solids}, 57\penalty0
  (1):\penalty0 178--203, 2009.

\bibitem[Garcia-Gonzalez and Hossain(2021)]{garcia2021microstructural}
Daniel Garcia-Gonzalez and Mokarram Hossain.
\newblock Microstructural modelling of hard-magnetic soft materials:
  Dipole--dipole interactions versus zeeman effect.
\newblock \emph{Extreme Mechanics Letters}, 48:\penalty0 101382, 2021.

\bibitem[Moreno-Mateos et~al.(2022)Moreno-Mateos, Lopez-Donaire, Hossain, and
  Garcia-Gonzalez]{moreno2022effects}
Miguel~Angel Moreno-Mateos, Maria~Luisa Lopez-Donaire, Mokarram Hossain, and
  Daniel Garcia-Gonzalez.
\newblock Effects of soft and hard magnetic particles on the mechanical
  performance of ultra-soft magnetorheological elastomers.
\newblock \emph{Smart Materials and Structures}, 2022.

\bibitem[Grasinger et~al.(2021{\natexlab{a}})Grasinger, Majidi, and
  Dayal]{grasingerIPtorque}
Matthew Grasinger, Carmel Majidi, and Kaushik Dayal.
\newblock Nonlinear statistical mechanics drives intrinsic electrostriction and
  volumetric torque in polymer networks.
\newblock \emph{Physical Review E}, 103:\penalty0 042504, 2021{\natexlab{a}}.

\bibitem[Grasinger et~al.(2021{\natexlab{b}})Grasinger, Mozaffari, and
  Sharma]{grasingerIPflexoelectricity}
Matthew Grasinger, Kosar Mozaffari, and Pradeep Sharma.
\newblock Flexoelectricity in soft elastomers and the molecular mechanisms
  underpinning the design and emergence of giant flexoelectricity.
\newblock \emph{Proceedings of the National Academy of Sciences}, 118\penalty0
  (21), 2021{\natexlab{b}}.
\newblock ISSN 0027-8424.

\bibitem[Grasinger(2019)]{grasinger2019multiscale}
Matthew Grasinger.
\newblock \emph{Multiscale Modeling and Theoretical Design of Dielectric
  Elastomers}.
\newblock PhD thesis, Carnegie Mellon University, 2019.

\bibitem[Cohen et~al.(2016)Cohen, Dayal, and
  deBotton]{cohen2016electroelasticity}
Noy Cohen, Kaushik Dayal, and Gal deBotton.
\newblock Electroelasticity of polymer networks.
\newblock \emph{Journal of the Mechanics and Physics of Solids}, 92:\penalty0
  105--126, 2016.

\bibitem[Cohen(2018)]{cohen2018generalized}
Noy Cohen.
\newblock A generalized electro-elastic theory of polymer networks.
\newblock \emph{Journal of the Mechanics and Physics of Solids}, 110:\penalty0
  173--191, 2018.

\bibitem[Itskov et~al.(2018)Itskov, Khi{\^e}m, and
  Waluyo]{itskov2018electroelasticity}
Mikhail Itskov, Vu~Ngoc Khi{\^e}m, and Sugeng Waluyo.
\newblock Electroelasticity of dielectric elastomers based on molecular chain
  statistics.
\newblock \emph{Mathematics and Mechanics of Solids}, 2018.

\bibitem[Bergstr{\"o}m and Boyce(1998)]{bergstrom1998constitutive}
Jorgen~S Bergstr{\"o}m and Mary~C Boyce.
\newblock Constitutive modeling of the large strain time-dependent behavior of
  elastomers.
\newblock \emph{Journal of the Mechanics and Physics of Solids}, 46\penalty0
  (5):\penalty0 931--954, 1998.

\bibitem[Hossain and Steinmann(2011)]{hossain2011modelling}
Mokarram Hossain and P~Steinmann.
\newblock Modelling and simulation of the curing process of polymers by a
  modified formulation of the arruda--boyce model.
\newblock \emph{Archives of Mechanics}, 63\penalty0 (5-6):\penalty0 621--633,
  2011.

\bibitem[Khandagale et~al.()Khandagale, Breitzman, Majidi, and
  Dayal]{khandagale2022statistical}
Pratik Khandagale, Timothy Breitzman, Carmel Majidi, and Kaushik Dayal.
\newblock Statistical field theory for nonlinear elasticity of polymer networks
  with nonlocal excluded volume interactions.
\newblock \emph{(in review)}.

\bibitem[Mulderrig et~al.(2021)Mulderrig, Li, and Bouklas]{mulderrig2021affine}
Jason Mulderrig, Bin Li, and Nikolaos Bouklas.
\newblock Affine and non-affine microsphere models for chain scission in
  polydisperse elastomer networks.
\newblock \emph{Mechanics of Materials}, 160:\penalty0 103857, 2021.

\bibitem[Mao et~al.(2017)Mao, Talamini, and Anand]{mao2017rupture}
Yunwei Mao, Brandon Talamini, and Lallit Anand.
\newblock Rupture of polymers by chain scission.
\newblock \emph{Extreme Mechanics Letters}, 13:\penalty0 17--24, 2017.

\bibitem[Grasinger and Dayal(2021)]{grasinger2020architected}
Matthew Grasinger and Kaushik Dayal.
\newblock Architected elastomer networks for optimal electromechanical
  response.
\newblock \emph{Journal of the Mechanics and Physics of Solids}, 146:\penalty0
  104171, 2021.
\newblock ISSN 0022-5096.

\bibitem[Rastak and Linder(2018)]{rastak2018non}
Reza Rastak and Christian Linder.
\newblock A non-affine micro-macro approach to strain-crystallizing rubber-like
  materials.
\newblock \emph{Journal of the Mechanics and Physics of Solids}, 111:\penalty0
  67--99, 2018.

\bibitem[Lei and Liu(2022)]{lei2022network}
Jincheng Lei and Zishun Liu.
\newblock A network mechanics method to study the mechanism of the
  large-deformation fracture of elastomers.
\newblock \emph{Journal of Applied Physics}, 132\penalty0 (13):\penalty0
  135101, 2022.

\bibitem[Cianchetti et~al.(2018)Cianchetti, Laschi, Menciassi, and
  Dario]{cianchetti2018biomedical}
Matteo Cianchetti, Cecilia Laschi, Arianna Menciassi, and Paolo Dario.
\newblock Biomedical applications of soft robotics.
\newblock \emph{Nature Reviews Materials}, 3\penalty0 (6):\penalty0 143--153,
  2018.

\bibitem[Majidi(2019)]{majidi2019soft}
Carmel Majidi.
\newblock Soft-matter engineering for soft robotics.
\newblock \emph{Advanced Materials Technologies}, 4\penalty0 (2):\penalty0
  1800477, 2019.

\bibitem[Ates et~al.(2022)Ates, Nguyen, Gonzalez-Macia, Morales-Narv{\'a}ez,
  G{\"u}der, Collins, and Dincer]{ates2022end}
H~Ceren Ates, Peter~Q Nguyen, Laura Gonzalez-Macia, Eden Morales-Narv{\'a}ez,
  Firat G{\"u}der, James~J Collins, and Can Dincer.
\newblock End-to-end design of wearable sensors.
\newblock \emph{Nature Reviews Materials}, 7\penalty0 (11):\penalty0 887--907,
  2022.

\bibitem[Heikenfeld et~al.(2018)Heikenfeld, Jajack, Rogers, Gutruf, Tian, Pan,
  Li, Khine, Kim, and Wang]{heikenfeld2018wearable}
Jajack Heikenfeld, Andrew Jajack, Jim Rogers, Philipp Gutruf, Lei Tian, Tingrui
  Pan, Ruya Li, Michelle Khine, Jintae Kim, and Juanhong Wang.
\newblock Wearable sensors: modalities, challenges, and prospects.
\newblock \emph{Lab on a Chip}, 18\penalty0 (2):\penalty0 217--248, 2018.

\bibitem[Verron(2015)]{verron2015questioning}
Erwan Verron.
\newblock Questioning numerical integration methods for microsphere (and
  microplane) constitutive equations.
\newblock \emph{Mechanics of Materials}, 89:\penalty0 216--228, 2015.

\bibitem[Casta{\~n}eda and Galipeau(2011)]{castaneda2011homogenization}
P~Ponte Casta{\~n}eda and E~Galipeau.
\newblock Homogenization-based constitutive models for magnetorheological
  elastomers at finite strain.
\newblock \emph{Journal of the Mechanics and Physics of Solids}, 59\penalty0
  (2):\penalty0 194--215, 2011.

\bibitem[Jiang et~al.(2022)Jiang, Li, and Hu]{jiang2022strain}
Yiyuan Jiang, Li~Li, and Yujin Hu.
\newblock Strain gradient elasticity theory of polymer networks.
\newblock \emph{Acta Mechanica}, pages 1--19, 2022.

\bibitem[Grasinger et~al.(2022)Grasinger, Dayal, deBotton, and
  Purohit]{grasinger2022statistical}
Matthew Grasinger, Kaushik Dayal, Gal deBotton, and Prashant~K Purohit.
\newblock Statistical mechanics of a dielectric polymer chain in the force
  ensemble.
\newblock \emph{Journal of the Mechanics and Physics of Solids}, 158:\penalty0
  104658, 2022.

\bibitem[Grasinger and Dayal(2020)]{grasinger2020statistical}
Matthew Grasinger and Kaushik Dayal.
\newblock Statistical mechanical analysis of the electromechanical coupling in
  an electrically-responsive polymer chain.
\newblock \emph{Soft Matter}, 16:\penalty0 6265--6284, 2020.

\bibitem[Kuehnel(2003)]{kuehnel2003minimization}
Frank~O Kuehnel.
\newblock On the minimization over {SO}(3) manifolds.
\newblock Technical report, Citeseer, 2003.

\bibitem[Kearsley(1989)]{kearsely1989strain}
EA~Kearsley.
\newblock Strain invariants expressed as average stretches.
\newblock \emph{Journal of Rheology}, 33\penalty0 (5):\penalty0 757--760, 1989.

\bibitem[Bergstr{\"o}m(1999)]{bergstrom1999large}
J{\"o}rgen~Stefan Bergstr{\"o}m.
\newblock \emph{Large strain time-dependent behavior of elastomeric materials}.
\newblock PhD thesis, Massachusetts Institute of Technology, 1999.

\bibitem[Itskov et~al.(2012)Itskov, Dargazany, and
  H{\"o}rnes]{itskov2012taylor}
Mikhail Itskov, Roozbeh Dargazany, and Karl H{\"o}rnes.
\newblock Taylor expansion of the inverse function with application to the
  langevin function.
\newblock \emph{Mathematics and Mechanics of Solids}, 17\penalty0 (7):\penalty0
  693--701, 2012.

\bibitem[Horn and Johnson(1985)]{horn1985matrix}
Roger~A Horn and Charles~R Johnson.
\newblock \emph{Matrix analysis}.
\newblock Cambridge university press Cambridge, 1985.

\bibitem[Vieyres et~al.(2013)Vieyres, P\'{e}rez-Aparicio, Albouy, Sanseau,
  Saalw\:{a}chter, Long, and Sotta]{vieyres2013sulfur}
Arnaud Vieyres, Roberto P\'{e}rez-Aparicio, Pierre-Antoine Albouy, Olivier
  Sanseau, Kay Saalw\:{a}chter, Didier~R Long, and Paul Sotta.
\newblock Sulfur-cured natural rubber elastomer networks: correlating
  cross-link density, chain orientation, and mechanical response by combined
  techniques.
\newblock \emph{Macromolecules}, 46\penalty0 (3):\penalty0 889--899, 2013.

\bibitem[Gu(2019)]{gu2019controlling}
Yuwei Gu.
\newblock \emph{Controlling polymer network topology}.
\newblock PhD thesis, Massachusetts Institute of Technology, 2019.

\bibitem[Alam{\'e} and Brassart(2019)]{alame2019relative}
Ghadeer Alam{\'e} and Laurence Brassart.
\newblock Relative contributions of chain density and topology to the
  elasticity of two-dimensional polymer networks.
\newblock \emph{Soft matter}, 15\penalty0 (28):\penalty0 5703--5713, 2019.

\bibitem[Zhao et~al.(2019)Zhao, Kim, Chester, Sharma, and
  Zhao]{zhao2019mechanics}
Ruike Zhao, Yoonho Kim, Shawn~A Chester, Pradeep Sharma, and Xuanhe Zhao.
\newblock Mechanics of hard-magnetic soft materials.
\newblock \emph{Journal of the Mechanics and Physics of Solids}, 124:\penalty0
  244--263, 2019.

\bibitem[Bastola and Hossain(2021)]{bastola2021shape}
Anil~K Bastola and Mokarram Hossain.
\newblock The shape--morphing performance of magnetoactive soft materials.
\newblock \emph{Materials \& Design}, 211:\penalty0 110172, 2021.

\bibitem[Stockmayer(1967)]{stockmayer1967dielectric}
Walter~H Stockmayer.
\newblock Dielectric dispersion in solutions of flexible polymers.
\newblock \emph{Pure and Applied Chemistry}, 15\penalty0 (3-4):\penalty0
  539--554, 1967.

\bibitem[Sun et~al.(2020)Sun, Chernysh, Weisel, and Purohit]{sun2020fibrous}
Chuanpeng Sun, Irina~N Chernysh, John~W Weisel, and Prashant~K Purohit.
\newblock Fibrous gels modelled as fluid-filled continua with double-well
  energy landscape.
\newblock \emph{Proceedings of the Royal Society A}, 476\penalty0
  (2244):\penalty0 20200643, 2020.

\bibitem[Broedersz and MacKintosh(2014)]{broedersz2014modeling}
Chase~P Broedersz and Fred~C MacKintosh.
\newblock Modeling semiflexible polymer networks.
\newblock \emph{Reviews of Modern Physics}, 86\penalty0 (3):\penalty0 995,
  2014.

\bibitem[Abhilash et~al.(2014)Abhilash, Baker, Trappmann, Chen, and
  Shenoy]{abhilash2014remodeling}
AS~Abhilash, Brendon~M Baker, Britta Trappmann, Christopher~S Chen, and Vivek~B
  Shenoy.
\newblock Remodeling of fibrous extracellular matrices by contractile cells:
  predictions from discrete fiber network simulations.
\newblock \emph{Biophysical journal}, 107\penalty0 (8):\penalty0 1829--1840,
  2014.

\bibitem[Pritchard et~al.(2014)Pritchard, Huang, and
  Terentjev]{pritchard2014mechanics}
Robyn~H Pritchard, Yan Yan~Shery Huang, and Eugene~M Terentjev.
\newblock Mechanics of biological networks: from the cell cytoskeleton to
  connective tissue.
\newblock \emph{Soft matter}, 10\penalty0 (12):\penalty0 1864--1884, 2014.

\bibitem[Friesecke and Theil(2002)]{friesecke2002validity}
Gero Friesecke and Florian Theil.
\newblock Validity and failure of the cauchy-born hypothesis in a
  two-dimensional mass-spring lattice.
\newblock \emph{Journal of nonlinear Science}, 12\penalty0 (5), 2002.

\bibitem[Matsui et~al.(2022)Matsui, Matsumori, Ito, Hase, and
  Yoshida]{matsui2022visualizing}
Takayuki Matsui, Tadayoshi Matsumori, Yuji Ito, Yoko Hase, and Hiroyuki
  Yoshida.
\newblock Visualizing invisible phase transitions in blue phase liquid crystals
  using early warning indicators.
\newblock \emph{Small}, page 2200113, 2022.

\bibitem[Zheng et~al.(2022)Zheng, Niloy, Celli, Tobasco, and
  Plucinsky]{zheng2022continuum}
Yue Zheng, Imtiar Niloy, Paolo Celli, Ian Tobasco, and Paul Plucinsky.
\newblock Continuum field theory for the deformations of planar kirigami.
\newblock \emph{Physical Review Letters}, 128\penalty0 (20):\penalty0 208003,
  2022.

\end{thebibliography}

\end{document}